\documentclass[journal]{IEEEtran}

\usepackage{amsmath,amssymb,amsthm}
\usepackage{graphicx}
\usepackage{booktabs}
\usepackage{array}
\usepackage{multirow}
\usepackage{cite}
\usepackage{url}
\usepackage{hyperref}
\usepackage{xcolor}
\usepackage{balance}
\usepackage{algorithm}
\usepackage{algpseudocode}
\usepackage{bm}
\usepackage{siunitx}
\usepackage{caption}
\usepackage{pifont}

\usepackage{subcaption}
\DeclareMathOperator*{\argmax}{arg\,max}
\newcommand{\DKL}{D_{\mathrm{KL}}}

\newcommand{\Vmin}{\underline{V}}
\newcommand{\Vmax}{\overline{V}}

\newtheorem{remark}{Remark}
\newtheorem{proposition}{Proposition}

\title{Criteria-Aware EMT-Based Short-Term Voltage Performance Index for Dynamic Assessment of Inverter-Dominated Power Systems}

\author{
Mohammad~Almomani,~\IEEEmembership{Graduate Student Member,~IEEE,} and~Venkataramana~Ajjarapu,~\IEEEmembership{Fellow,~IEEE}
\thanks{The authors are with the Department of Electrical and Computer Engineering, Iowa State University, Ames, IA 50011 USA. 
Emails: \{mmomani, vajjarap\}@iastate.edu. Website: \{https://mohammadalmomani.github.io/.\}}
}

\begin{document}
\maketitle

%% ── Abstract ─────────────────────────────────────────────────
\begin{abstract}
The increasing penetration of inverter-based resources (IBRs) into bulk power systems has fundamentally altered short-term voltage dynamics following disturbances. Conventional short-circuit capacity (SCC) metrics provide a useful screening indicator of grid strength but are unable to fully capture post-disturbance voltage behavior at buses with dynamic loads, converter controls, or protection interactions. A bus with high SCC may still experience deep voltage dips, delayed recovery, or transient overvoltage that violates operating criteria. This paper proposes the Short-Term Voltage Performance Index (STVPI), an electromagnetic-transient (EMT)-based, criteria-aware metric that quantifies the quality of the post-disturbance voltage waveform relative to user-defined performance limits. STVPI processes voltage signals at the half-cycle level by computing a weighted log-amplitude ratio between the actual waveform and an ideal half-sine reference. Monotonic recovery envelopes on the overvoltage and undervoltage sides are compared against half-normal reference distributions using Kullback--Leibler (KL) divergence, normalized by the KL divergence of the critical voltage envelope, yielding two directional indices---$\mathrm{STVPI}^{+}$ and $\mathrm{STVPI}^{-}$---whose combination produces a baseline-corrected scalar severity score. Bus-level and event-level aggregation derive BSTVPI and ESTVPI, enabling simultaneous identification of dynamically weak buses and critical fault contingencies. The framework is validated on the IEEE 9-bus and 39-bus test systems with IBR integration, demonstrating applications in contingency ranking, weak-bus identification, control and protection evaluation, model validation, and scenario reduction. Results reveal significant mismatches between SCC- and STVPI-based rankings (Kendall $\tau_K \approx 0.41$), confirming that dynamic interaction effects require waveform-level assessment beyond steady-state strength metrics.
\end{abstract}

\begin{IEEEkeywords}
Electromagnetic transient simulation, inverter-based resources, short-circuit ratio, short-term voltage stability, voltage performance index, Kullback--Leibler divergence, contingency ranking, weak bus identification.
\end{IEEEkeywords}

%% ════════════════════════════════════════════════════════════

\section{Introduction}
\label{sec:intro}

The rapid integration of inverter-based resources (IBRs)---including utility-scale photovoltaic (PV) generation, battery energy storage systems (BESS), and Type~III/IV wind turbines---into transmission-level power systems has substantially modified the dynamic voltage response following disturbances. Unlike synchronous generators, IBRs exhibit current-limited, control-dominated fault behavior, fast reactive-current injection, and voltage-dependent power reduction, each of which interacts with the surrounding network in ways that cannot be captured by quasi-static or phasor-domain metrics alone \cite{matevosyan2021future,nerc2023emt,utilitydive2025}.

Short-circuit capacity (SCC) and related metrics, including the short-circuit ratio (SCR), multi-infeed effective SCR (MESCR), generalized SCR (gSCR), and more recent renewable-cluster variants, were developed to characterize the electrical strength of the network seen by an IBR \cite{zhang2014evaluating,durrant2003model,ma2023gscr,yu2024mrscr}. These indices remain important for interconnection screening and identifying weak-grid conditions; however, they are fundamentally steady-state, impedance-based measures. They do not explicitly represent the transient voltage trajectory following a disturbance, including delayed voltage recovery caused by induction-motor loads, transient overvoltage associated with converter reactive-current injection, or protection-driven disturbances arising from uncoordinated IBR controls \cite{nerc2019fidvr,kawabe2015photovoltaic,nagpal2025fault}. Conversely, buses with relatively low short-circuit strength may still exhibit satisfactory recovery when grid-forming converters or coordinated reactive-power support are available \cite{kordkandi2022ridethru}.

To address these limitations, the short-term voltage stability (STVS) literature has introduced model-based, trajectory-based, and data-driven assessment methods. Model-based approaches commonly evaluate voltage nadir, recovery time, and composite trajectory indices such as the transient voltage index (TVI), which combines voltage-deviation depth and duration \cite{boricic2021review,cao2021review,sun2018bireactive}. Data-driven methods, including support vector machines, recurrent neural networks, convolutional neural networks, and graph neural networks, have demonstrated promising speed for online stability screening \cite{luo2021spatial,zhu2021intelligent,lv2025graph}. Nevertheless, most existing methods use RMS or positive-sequence phasor data, often provide binary stable/unstable classifications, and do not yield a continuous severity measure that is simultaneously normalized, physically interpretable, criteria-aware, and comparable across buses and contingencies. Similarly, voltage-sag and power-quality indices, such as those defined in IEEE Std~1564 and IEC~61000-4-30, quantify disturbance depth and duration at equipment terminals but were not developed for dynamic transmission-level contingency ranking or post-disturbance voltage-recovery assessment \cite{ieee2014std1564,borras2021wavelet,lp2024norm}.

The increasing reliance on electromagnetic-transient (EMT) studies creates a need for assessment metrics that can exploit high-resolution voltage waveforms directly. This need is reinforced by the growing importance of IBR control dynamics, phase-locked-loop behavior, current limiting, and protection interactions that may not be adequately represented in RMS-domain simulations \cite{matevosyan2021future,ferc2023order901}. IEEE~2800-2022 specifies IBR voltage and frequency ride-through, reactive-current injection, and post-disturbance recovery requirements in terms of fast voltage behavior, while recent NERC guidance has further emphasized EMT modeling capability for planning and reliability studies \cite{ieee2022std2800,nerc2025alert}. Despite this momentum, few published metrics are specifically designed to process EMT voltage waveforms and provide scalar, criteria-based severity measures suitable for ranking buses and contingencies. Information-theoretic measures such as KL divergence have been applied to voltage-influence analysis and anomaly detection, but not to the assessment of post-disturbance voltage-recovery quality in EMT simulations \cite{infotheo2021voltage}.

This paper proposes the Short-Term Voltage Performance Index (STVPI) to fill this gap. The core contribution is a waveform-level severity measure derived from the weighted geometric mean of the actual voltage relative to an ideal half-sine reference. Monotonic recovery envelopes extracted from this measure are compared to half-normal reference distributions using KL divergence and normalized against the divergence associated with the critical voltage envelope. This information-theoretic normalization anchors the index to operating criteria in a mathematically principled manner, provides independence from disturbance duration and sample length, and naturally separates overvoltage and undervoltage contributions. The resulting directional indices $\mathrm{STVPI}^{+}$ and $\mathrm{STVPI}^{-}$ are aggregated into bus-level (BSTVPI) and event-level (ESTVPI) severity matrices that enable simultaneous identification of dynamically weak buses and critical fault contingencies.

The main contributions of this paper are as follows:
\begin{itemize}
\item A half-cycle, waveform-level voltage performance measure $G[k]$ based on the weighted log-amplitude ratio between the actual EMT voltage and an ideal half-sine reference.

\item An information-theoretic framework that compares monotonic recovery envelopes to half-normal reference distributions via KL divergence, normalized by the critical-voltage KL divergence, yielding baseline-corrected, criteria-normalized directional indices $\mathrm{STVPI}^{+}$ and $\mathrm{STVPI}^{-}$.

\item A hierarchical aggregation scheme producing bus-level (BSTVPI) and event-level (ESTVPI) severity rankings from the signal-level indices, together with violation flags and critical-signal identification.

\item A demonstration on the IEEE 9-bus and 39-bus test systems with IBR integration, covering contingency screening, dynamic weak-bus identification, control and protection evaluation, model validation, and scenario reduction for planning studies.

\item A comparative analysis between SCC-based and STVPI-based rankings that quantifies the mismatch attributable to dynamic-load behavior, converter controls, and IBR--network interactions in inverter-dominated operating conditions.
\end{itemize}

The remainder of this paper is organized as follows. Section~\ref{sec:framework} develops the STVPI mathematical framework. Section~\ref{sec:aggregation} presents signal-, bus-, and event-level aggregation. Section~\ref{sec:casestudies} presents simulation results. Section~\ref{sec:applications} discusses applications. Section~\ref{sec:conclusion} concludes.

%% ════════════════════════════════════════════════════════════
\section{STVPI Mathematical Framework}
\label{sec:framework}

\subsection{Half-Cycle Waveform Normalization}
\label{ssec:normalization}

Let the EMT voltage waveform be denoted by $q[n]$, expressed in per-unit. For half-cycle $k$, let $n_1^{(k)}$ and $n_2^{(k)}$ be two consecutive zero-crossing samples, and define
\begin{equation}
  N_k = n_2^{(k)} - n_1^{(k)}.
  \label{eq:Nk}
\end{equation}
For the local half-cycle sample index $\ell = 1, 2, \ldots, N_k - 1$, so that zero-amplitude endpoints are excluded from the logarithmic calculation, the normalized reference half-sine is
\begin{equation}
  V_{\mathrm{ref},k}[\ell] = \sin\!\left(\frac{\pi\ell}{N_k}\right).
  \label{eq:Vref}
\end{equation}
For the negative half-cycle, the absolute value of the waveform is used:
\begin{equation}
  Q_k[\ell] = \bigl|q\!\left[n_1^{(k)}+\ell\right]\bigr|.
  \label{eq:Qk}
\end{equation}
The lower and upper critical half-cycle waveforms are defined as
\begin{equation}
  V_{\min,k}[\ell] = \Vmin\, V_{\mathrm{ref},k}[\ell], \quad
  V_{\max,k}[\ell] = \Vmax\, V_{\mathrm{ref},k}[\ell],
  \label{eq:Vminmax}
\end{equation}
where $\Vmin$ and $\Vmax$ are the lower and upper voltage performance limits, respectively. To avoid numerical singularities near the zero crossings, define the valid sample set
\begin{equation}
  \mathcal{I}_k = \bigl\{\ell : V_{\mathrm{ref},k}[\ell] > \tau\bigr\},
  \label{eq:Ik}
\end{equation}
for a small threshold $\tau > 0$. The normalized half-sine weight is
\begin{equation}
  w_k[\ell] = \frac{V_{\mathrm{ref},k}[\ell]} {\displaystyle\sum_{j\in\mathcal{I}_k} V_{\mathrm{ref},k}[j]}, \quad \ell\in\mathcal{I}_k,
  \label{eq:weights}
\end{equation}
so that $\sum_{\ell\in\mathcal{I}_k} w_k[\ell] = 1$.

\subsection{Weighted Log-Amplitude Ratio}
\label{ssec:logamp}

The half-cycle waveform performance quantity is defined as
\begin{equation}
  S_e[k] = \sum_{\ell\in\mathcal{I}_k} w_k[\ell]\,\ln\!\left(\frac{Q_k[\ell]+\varepsilon}             {V_{\mathrm{ref},k}[\ell]+\varepsilon}\right),
  \label{eq:Se}
\end{equation}
where $\varepsilon > 0$ is a small regularization constant. The corresponding half-cycle voltage performance ratio is
\begin{equation}
  G[k] = \exp\!\left(S_e[k]\right) 
       = \exp\!\left[\sum_{\ell\in\mathcal{I}_k} w_k[\ell]\,
         \ln\!\left(\frac{Q_k[\ell]+\varepsilon}
                         {V_{\mathrm{ref},k}[\ell]+\varepsilon}
                   \right)\right].
  \label{eq:Gk}
\end{equation}
Hence $G[k]$ is the \emph{weighted geometric mean} of the actual waveform magnitude relative to the ideal half-sine reference. The following limit cases establish the index boundaries:
\begin{itemize}
  \item \emph{Ideal recovery}: $Q_k[\ell]=V_{\mathrm{ref},k}[\ell]$ $\Rightarrow$ $G[k]=1$.
  \item \emph{Lower critical waveform}: $Q_k[\ell]=\Vmin\, V_{\mathrm{ref},k}[\ell]$ $\Rightarrow$ $G[k]\approx\Vmin$.
  \item \emph{Upper critical waveform}: $Q_k[\ell]=\Vmax\, V_{\mathrm{ref},k}[\ell]$ $\Rightarrow$ $G[k]\approx\Vmax$.
\end{itemize}
Therefore, the half-cycle acceptance criterion is
\begin{equation}
  \Vmin \leq G[k] \leq \Vmax,\quad\Longleftrightarrow\quad\ln(\Vmin) \leq S_e[k] \leq \ln(\Vmax).
  \label{eq:acceptance}
\end{equation}

This procedure produces one $G[k]$ value for each half-cycle of a single-phase waveform. For a three-phase waveform, the procedure is applied independently to each phase; when zero-crossings from all three phases are ordered in time, the combined sequence produces one updated $G[k]$ value approximately every sixth of a cycle, providing \emph{sub-cycle} resolution of the voltage performance trajectory.

Fig.~\ref{fig:halfcycle} illustrates the $G[k]$ computation for two consecutive half-cycles extracted from a 60\,Hz per-unit voltage signal at bus~9 of the IEEE 9-bus test system.

For the negative half-cycle, the absolute value of the waveform is taken (see~\eqref{eq:Qk}):
\(
  Q_{k}[\ell] = -q\!\left[n_1^{(k)}+\ell\right] \geq 0,
  \label{eq:Qk_neg}
\)
so that the log-amplitude ratio is computed against the positive reference half-sine $V_{\mathrm{ref},k}[\ell]$. The resulting weighted geometric mean for this half-cycle is
\(   G[k=1] = 0.8409, \) indicating that the actual negative half-cycle waveform reached only 84.09\% of the ideal reference amplitude on average.

For the positive half-cycle, no absolute value is required: $Q_k[\ell] =  \!\left[n_1^{(k)}+\ell\right]$. The resulting performance ratio is \(   G[k=2] = 0.8064, \)
indicating an 80.64\% amplitude ratio relative to the ideal half-sine reference. Both values lie below unity but above the lower performance limit $\Vmin = 0.8$.

\subsection{Sub-Cycle Performance Resolution}
\label{ssec:subcycle}

A key feature of the STVPI framework is that $G[k]$ is evaluated \emph{independently for each half-cycle}, producing a new performance value approximately every half period (8.33\,ms at 60\,Hz). This temporal resolution is fundamentally finer than that of phasor-domain metrics, which typically update once per cycle at best. For some applications, the performance criterion may change every 1\,ms; for example, this may be required when evaluating UPS output-performance criteria specified in IEC~62040-3. In such cases, a resolution finer than one-half cycle is required. The proposed index can be evaluated over sub-half-cycle windows to quantify these faster variations. Fig.~\ref{fig:subcyle} illustrates this concept for one fundamental half-cycle divided into two segments: $k=1$ (yellow shaded) and $k=2$ (blue shaded). For each segment, the actual per-unit waveform $Q_k[\ell]$ (blue curve) is compared with the ideal reference waveform $V_{\mathrm{ref},k}[\ell]$ (orange curve) and the lower critical bound $V_{\min,k}[\ell]=\Vmin V_{\mathrm{ref},k}[\ell]$ (gray curve). Two separate performance ratios are then obtained:

\begin{align*}
  G[k,1] = \exp\!\left[\sum_{\ell\in\mathcal{I}_{k,1}}
    w_{k,1}[\ell]\,\ln\!\frac{Q_{k,1}[\ell]+\varepsilon}
                               {V_{\mathrm{ref},k,1}[\ell]+\varepsilon}
    \right], \\
  G[k,2] = \exp\!\left[\sum_{\ell\in\mathcal{I}_{k,2}}
    w_{k,2}[\ell]\,\ln\!\frac{Q_{k,2}[\ell]+\varepsilon}
                               {V_{\mathrm{ref},k,2}[\ell]+\varepsilon}
    \right],
  \label{eq:Gk12}
\end{align*}

\noindent where subscripts $1$ and $2$ denote the first part and the second part of the half-cycles, and \(  \mathcal{I}_{k,1} = \bigl\{\ell : V_{\mathrm{ref},k}[\ell] >  \tau, 1\leq \ell\leq N_k/2\bigr\}, \) \(\mathcal{I}_{k,2} = \bigl\{\ell : V_{\mathrm{ref},k}[\ell] >  \tau, N_k/2\leq \ell\leq N_k\bigr\} \). The acceptance condition $\Vmin \leq G[k,i] \leq \Vmax$ is checked independently for each, so a waveform that satisfies the criterion in one part of a half-cycle may still violate it in the other. This independence is essential for capturing asymmetric fault behavior, such as the case shown in Fig.~\ref{fig:subcyle}, where the two parts of a half-cycle exhibit different lower critical bounds ($\Vmin = 0.80$ and $\Vmin = 0.85$, respectively).

\begin{figure}[!t]
  \centering
  \includegraphics[width=\columnwidth]{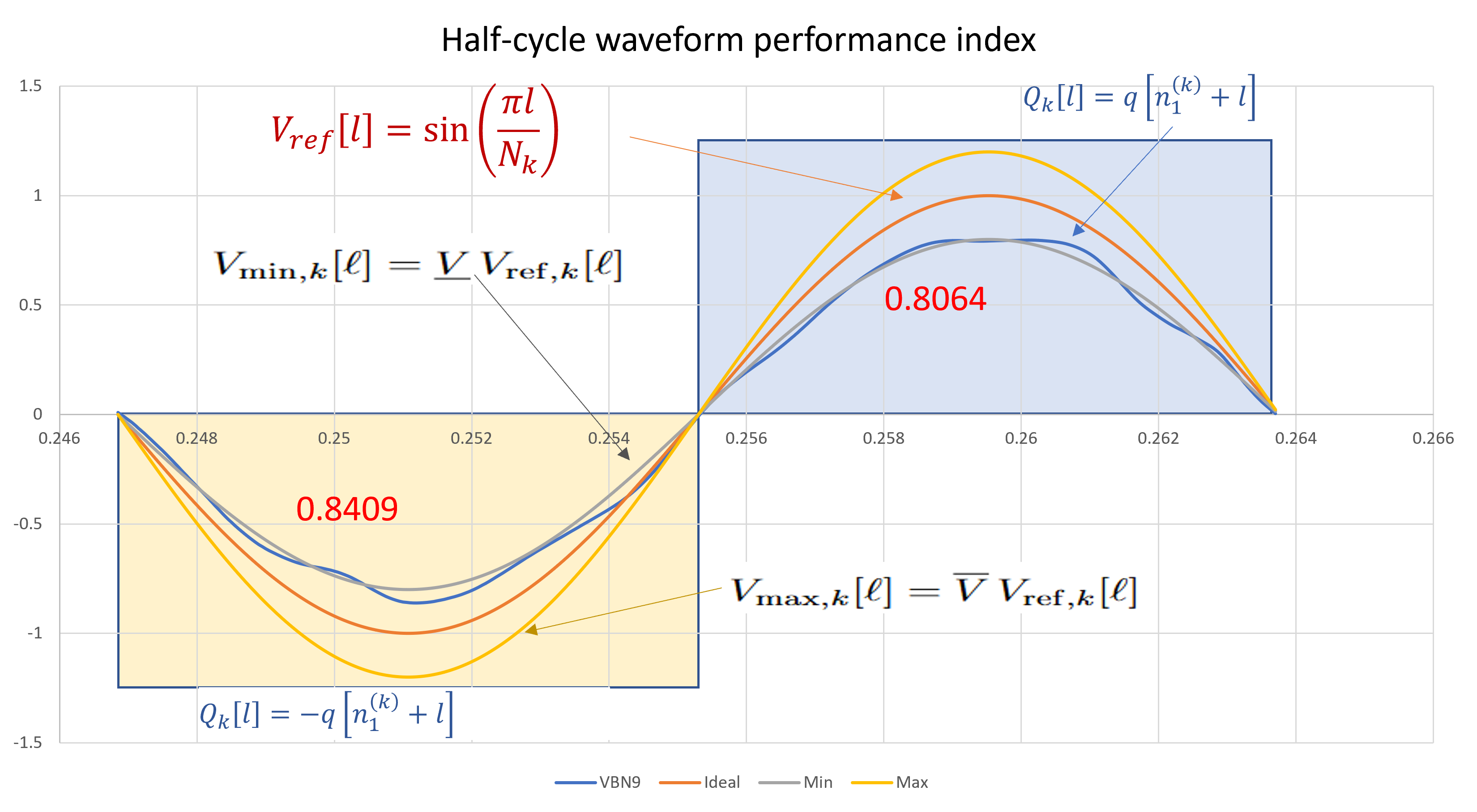}
\caption{Half-cycle resolution: independent ratios $G[k]$ and $G[k+1]$ are evaluated over the shaded half-cycles. The measured voltage $Q_k[\ell]$ is compared with the reference $V_{\mathrm{ref},k}[\ell]$ and critical bounds $V_{\min,k}[\ell]$ and $V_{\max,k}[\ell]$. Varying $\Vmin$ and $\Vmax$ demonstrates asymmetric criteria across half-cycles.}
\label{fig:halfcycle}
\end{figure}

\begin{figure}[!t]
  \centering
  \includegraphics[width=\columnwidth]{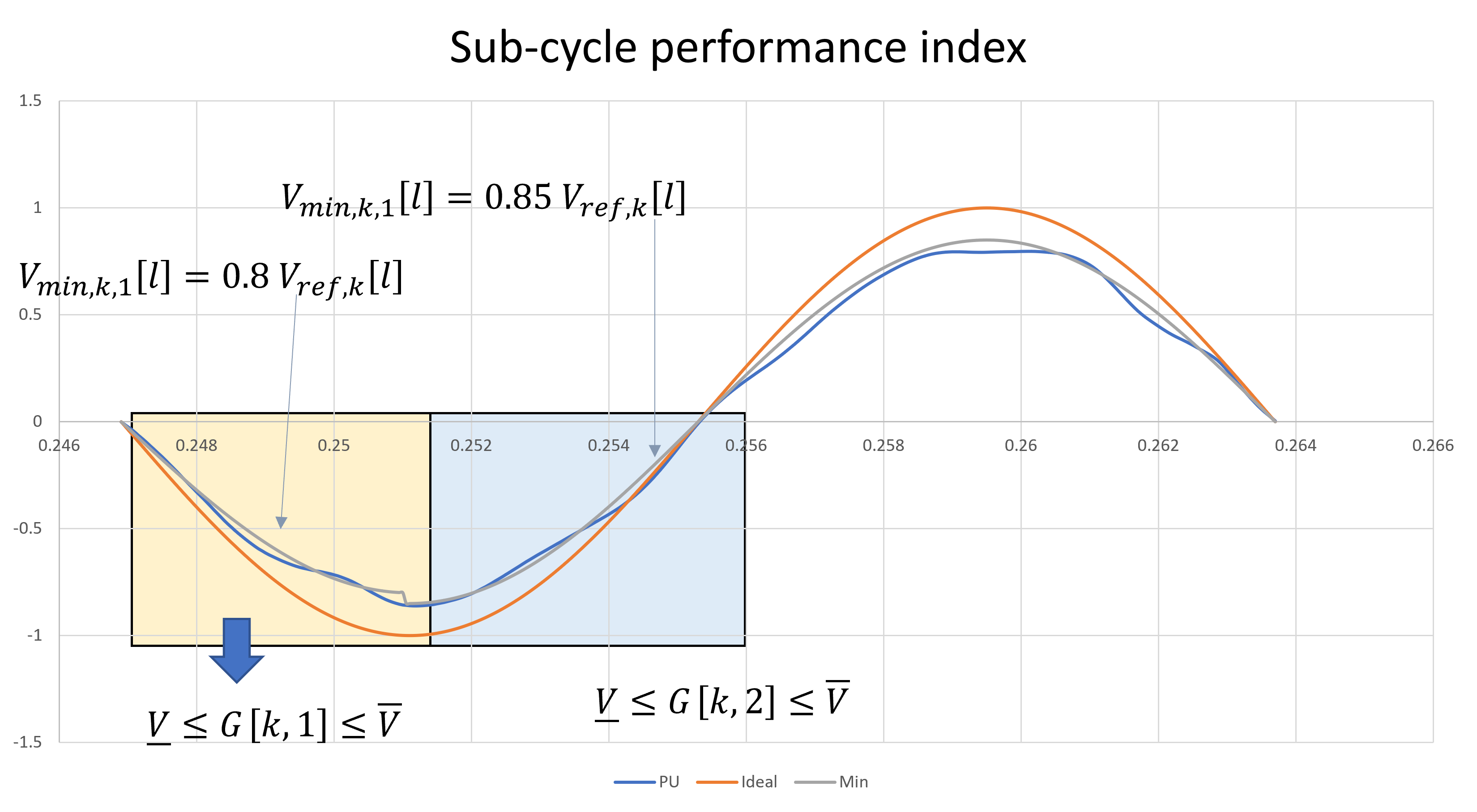}
  \caption{Sub-cycle waveform performance index computation for two parts of a half-cycle (IEEE 9-bus, bus~9). Blue curve: actual per-unit voltage ($V_{\mathrm{BN9}}$). Orange curve: ideal half-sine reference ($V_{\mathrm{ref}}[\ell] = \sin(\pi\ell/N_k)$). The lower limits are 0.8 and 0.85 for the first and second parts, respectively. }
  \label{fig:subcyle}
\end{figure}

\subsection{Multi-Cycle Performance Aggregation}
\label{ssec:multicycle}

While the sub-cycle $G[k]$ trajectory provides maximum temporal resolution, practical assessment often requires aggregating performance over observation windows spanning one or more fundamental cycles. Define the \emph{$N_c$-cycle performance window} as the set of $2N_c$ consecutive half-cycles centered at cycle $m$:
\begin{equation}
  \mathcal{K}_{m,N_c} = \{2(m-1)+1,\,2(m-1)+2,\,\ldots,\,2m N_c\}.
  \label{eq:window}
\end{equation}
The \emph{$N_c$-cycle aggregate performance index} is the geometric mean of the half-cycle ratios over the window:

\begin{align}
\bar{G}_{m,N_c}
&= \exp\!\left(
    \frac{1}{|\mathcal{K}_{m,N_c}|}
    \sum_{k\in\mathcal{K}_{m,N_c}} \ln G[k]
  \right) \notag \\
&= \left(
    \prod_{k\in\mathcal{K}_{m,N_c}} G[k]
  \right)^{1/|\mathcal{K}_{m,N_c}|}.
\label{eq:Gbar}
\end{align}

For $N_c=1$, the window covers two half-cycles (one full period), giving the \emph{one-cycle performance index} $\bar{G}_{m,1}$. For $N_c=2$, the window covers four half-cycles, giving the \emph{two-cycle performance index} $\bar{G}_{m,2}$. The relationship between the aggregate index and the underlying half-cycle sequence is established in the following proposition.

\begin{proposition}[Recovery Monotonicity of the Aggregate Index]
\label{prop:monotone}
Let $G[k]$ be the half-cycle performance sequence during a post-disturbance recovery phase.

\textit{(i) Undervoltage recovery.} If $G[k] \leq 1$ for all $k$ and the sequence is monotonically non-decreasing toward unity (i.e., $G[k] \leq G[k+1] \leq 1$), then
\begin{equation}
  \bar{G}_{m+1,N_c} \geq \bar{G}_{m,N_c}, \qquad  \bar{G}_{m,N_c} \leq 1,
  \label{eq:Gbar_mono_under}
\end{equation}
so the aggregate index is non-decreasing and approaches unity from below as the window advances.

\textit{(ii) Overvoltage recovery.} If $G[k] \geq 1$ for all $k$ and the sequence is monotonically non-increasing toward unity (i.e., $G[k] \geq G[k+1] \geq 1$), then
\begin{equation}
  \bar{G}_{m+1,N_c} \leq \bar{G}_{m,N_c}, \qquad  \bar{G}_{m,N_c} \geq 1,
  \label{eq:Gbar_mono_over}
\end{equation}
so the aggregate index is non-increasing and approaches unity from above as the window advances.

In both cases the aggregate index \emph{moves toward unity} monotonically with time, correctly reflecting the physical recovery of the voltage waveform.
\end{proposition}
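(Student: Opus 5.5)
The plan is to pass to logarithms and reduce everything to elementary facts about averages of monotone sequences. Set $a[k]=\ln G[k]$. Then \eqref{eq:Gbar} reads $\ln\bar{G}_{m,N_c}=\frac{1}{|\mathcal{K}_{m,N_c}|}\sum_{k\in\mathcal{K}_{m,N_c}} a[k]$, which is the arithmetic mean of $a$ over the window. Since $\ln$ and $\exp$ are strictly increasing, the hypotheses of part (i) become $a[k]\le a[k+1]\le 0$, and those of part (ii) become $a[k]\ge a[k+1]\ge 0$. Each claimed inequality on $\bar{G}$ is then equivalent to the same inequality on the window mean of $a$.

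First, the bound on $\bar{G}$. In case (i), a mean of non-positive numbers is non-positive, so $\ln\bar{G}_{m,N_c}\le 0$ and hence $\bar{G}_{m,N_c}\le 1$. In case (ii) the same argument with the signs reversed gives $\bar{G}_{m,N_c}\ge 1$. These bounds hold for every window and use only the sign hypothesis, not monotonicity.

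Second, the monotonicity in $m$. Here the definition \eqref{eq:window} is the obstacle. As printed, the upper index $2mN_c$ makes the window length depend on $m$, which conflicts with the phrase ``$2N_c$ consecutive half-cycles''. I would fix the interpretation used in the proof: the window is a sliding block of constant length $L=2N_c$, $\mathcal{K}_{m,N_c}=\{s_m+1,\dots,s_m+L\}$, with start indices $s_{m+1}\ge s_m$ (for example $s_m=2(m-1)$). I would state explicitly that this is the intended reading. With this reading, there is a bijection $k\mapsto k+(s_{m+1}-s_m)$ from $\mathcal{K}_{m,N_c}$ onto $\mathcal{K}_{m+1,N_c}$ that maps each index to one that is no earlier. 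In case (i), $a$ is non-decreasing, so termwise $a[k]\le a[k+(s_{m+1}-s_m)]$. Summing and dividing by $L$ gives $\ln\bar{G}_{m,N_c}\le\ln\bar{G}_{m+1,N_c}$. Case (ii) is identical with the inequalities reversed.

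Third, I would add a robustness remark covering the literal reading, where windows are nested and growing ($\{2m-1,\dots,2mN_c\}$). The same conclusion still follows from a standard fact: the running mean of a monotone sequence over a window whose both endpoints move forward is monotone in the same direction. One way to see this is to compare both means against the value of the boundary terms that are added or dropped. For instance, dropping a smallest initial term $a[2m-1]$ and appending the larger later terms cannot decrease the mean, because every appended term is at least the current mean, and the removed term is at most the current mean. Finally, I would note what ``approaches unity'' means. Monotone and bounded gives convergence of $\bar{G}_{m,N_c}$. Its limit equals $1$ exactly when $G[k]\to1$, since the window mean of a convergent sequence tends to the same limit. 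I would state that limit as following from the phrase ``toward unity'' in the hypothesis, rather than from monotonicity alone.
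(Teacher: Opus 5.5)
Your proposal is correct and takes essentially the same route as the paper's proof: pass to $\ln G[k]$, read $\ln\bar{G}_{m,N_c}$ as a window mean, and use monotonicity to compare the terms that leave the window with those that enter it. Your shift bijection is a cleaner form of the paper's drop-one/add-one step, and unlike that step it directly covers the two-half-cycle advance of the start index in \eqref{eq:window} as well as the length inconsistency you flagged; your explicit sign bound $\bar{G}_{m,N_c}\le 1$ (resp.\ $\ge 1$) and your caveat that the limit is unity only if $G[k]\to 1$ supply points the paper leaves implicit.
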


\begin{proof}
For case~(i), the log-domain representation of $\bar{G}_{m,N_c}$ is
\[
  \ln\bar{G}_{m,N_c}
  = \frac{1}{2N_c}\sum_{k\in\mathcal{K}_{m,N_c}} \ln G[k].
\]
When the window slides from $\mathcal{K}_{m,N_c}$ to $\mathcal{K}_{m+1,N_c}$, the earliest half-cycle $k_{\min}$ is dropped and a new half-cycle $k_{\max}+1$ is added. Since $G[k_{\max}+1] \geq G[k_{\min}]$ by the non-decreasing assumption, and $\ln(\cdot)$ is monotone, the log-sum increases, so $\ln\bar{G}_{m+1,N_c} \geq \ln\bar{G}_{m,N_c}$, which gives \eqref{eq:Gbar_mono_under}. Case~(ii) follows by symmetry with the inequality reversed.
\end{proof}

\begin{remark}[Asymmetric Monotonicity and the Signed Index]
\label{rem:asym_mono}
Proposition~\ref{prop:monotone} establishes that the \emph{monotonic recovery envelopes} $U[k]$ (overvoltage side) and $L[k]$ (undervoltage side) defined in Section~\ref{ssec:envelopes} are valid conservative bounds: $U[k]$ is non-increasing and $L[k]$ is non-decreasing, exactly as required for the KL-divergence normalization in Section~\ref{ssec:kl}. A mixed event in which the voltage first drops (undervoltage phase) and then overshoots (overvoltage phase) will have $G[k]$ crossing unity; in this case the proposition applies piecewise to the two phases, and both $U[k]$ and $L[k]$ capture the corresponding worst-case trajectories independently.
\end{remark}

The STVPI framework supports three classes of temporal resolution, each suited to a different set of short-term voltage phenomena.

\subsubsection*{Level~1 — Sub-Half-Cycle Resolution }

Dividing each half-cycle into $M$ equal sub-windows yields an update rate of $M \times 120\,\text{Hz}$ ($M \times 2f_0$ generally):
\begin{itemize}
  \item $M = 2$ (quarter-cycle): update every ${\sim}4.2\,\text{ms}$, effective rate ${\sim}240\,\text{Hz}$.
  \item $M = 8$ (1/16-cycle): update every ${\sim}1.0\,\text{ms}$, effective rate ${\sim}1\,\text{kHz}$.
\end{itemize}
At this resolution, $G[k,m]$ is computed over a sub-window $\ell\in\mathcal{I}_{k,m}$ that spans only a fraction of the half-cycle, so the reference half-sine must be evaluated only over the corresponding sub-arc:
\begin{equation}
  V_{\mathrm{ref},k,m}[\ell] = \sin\!\left(\frac{\pi(\ell + (m-1)N_k/M)}{N_k}\right), \quad \ell = 1,\ldots,N_k/M. \label{eq:Vref_sub}
\end{equation}
Sub-half-cycle resolution is required to resolve \emph{electromagnetic transients} (IBR switching artifacts, PLL loss-of-lock oscillations, transformer inrush) whose characteristic frequencies exceed $200\,\text{Hz}$. A simulation time step of $\Delta t \leq 50\,\si{\micro\second}$ (20\,kHz) is necessary to produce sufficient samples within each sub-window; standard EMT tools operating at 50--200\,$\si{\micro\second}$ satisfy this requirement \cite{nerc2023emt}.

\subsubsection*{Level~2 — Half-Cycle Resolution ($N_c = \tfrac{1}{2}$, default)}

One $G[k]$ value per half-cycle; update every ${\sim}8.3\,\text{ms}$, effective rate ${\sim}120\,\text{Hz}$. This is the \emph{default STVPI mode} used throughout this paper. It resolves \emph{control-driven voltage dynamics}: IBR reactive-current injection ramps (typical $10$--$200\,\text{Hz}$ bandwidth), fast automatic voltage regulator (AVR) action, and protection relay response times on the order of 1--3 cycles. A simulation time step of $\Delta t \leq 1\,\text{ms}$ suffices; at $50\,\si{\micro\second}$, each half-cycle contains $N_k \approx 167$ samples, providing statistically reliable weight vectors $w_k[\ell]$.

\subsubsection*{Level~3 — Multi-Cycle Resolution ($N_c \geq 1$)}

Multiple half-cycles are aggregated per the geometric mean of~\eqref{eq:Gbar}:
\begin{itemize}
  \item $N_c = 1$ (two half-cycles): update every ${\sim}16.7\,\text{ms}$, effective rate ${\sim}60\,\text{Hz}$.
  \item $N_c = 4$ (eight half-cycles): update every ${\sim}66.7\,\text{ms}$, effective rate ${\sim}15\,\text{Hz}$.
  \item $N_c = 16$ (32 half-cycles): update every ${\sim}267\,\text{ms}$, effective rate ${\sim}3.5\,\text{Hz}$.
\end{itemize}
Multi-cycle resolution is best suited to \emph{electromechanical oscillatory dynamics} (inter-area modes, 0.1--2\,Hz; local modes, 1--5\,Hz) and \emph{load-driven delayed recovery} (induction motor re-acceleration, 0.5--10\,Hz), where sub-cycle variation is noise and the dominant signal is the slow recovery envelope. A coarser simulation time step (up to $\Delta t = 1\,\text{ms}$ in the 16-cycle case) is acceptable, since only cycle-average information is needed, thereby reducing storage and processing requirements for long multi-second event windows.

Table~\ref{tab:resolution_comparison} consolidates the three resolution levels across all relevant dimensions: temporal resolution, required simulation time step, computational cost per signal, and the primary voltage dynamics class targeted.

\begin{table*}[!t]
\renewcommand{\arraystretch}{1.3}
\caption{STVPI Resolution Level Comparison: Temporal Resolution, Simulation Requirements, Computational Cost, and Targeted Applications}
\label{tab:resolution_comparison}
\centering
\resizebox{\textwidth}{!}{
\begin{tabular}{lccccll}
\toprule
\textbf{Resolution} 
  & \textbf{Window} $N_c$
  & \textbf{Update Period}
  & \textbf{Required $\Delta t$}
  & \textbf{Cost per Signal}
  & \textbf{Primary Application} \\
\midrule
1/16-cycle 
  & $1/32$ 
  & ${\sim}1\,\text{ms}$ 
  & $\leq 50\,\si{\micro\second}$ 
  & High ($N_k/16$ evals/window)
  & Electromagnetic transients, IBR switching \\
\textbf{Half-cycle} 
  & $\mathbf{1/2}$ 
  & ${\sim}8.3\,\text{ms}$ 
  & $\leq 1\,\text{ms}$ 
  & Moderate ($N_k$ evals/half-cycle)
  & Control dynamics, VRT, protection \\
1-cycle 
  & $1$ 
  & ${\sim}16.7\,\text{ms}$ 
  & $\leq 5\,\text{ms}$ 
  & Low (2 half-cycles averaged)
  & Fast electromechanical response \\
4-cycle 
  & $4$ 
  & ${\sim}66.7\,\text{ms}$ 
  & $\leq 20\,\text{ms}$ 
  & Low (8 half-cycles averaged)
  & Electromechanical oscillations \\
16-cycle 
  & $16$ 
  & ${\sim}267\,\text{ms}$ 
  & $\leq 100\,\text{ms}$ 
  & Very low (32 half-cycles averaged)
  & FIDVR, load-driven delayed recovery \\
\bottomrule
\end{tabular}}
\end{table*}

%% ── Revised Example Table ───────────────────────────────────────
Table~\ref{tab:window_comparison} compares performance index values at all resolution levels for the IEEE~9-bus bus~9 example from Fig.~\ref{fig:halfcycle}, with $G[k=1]=0.8409$ and $G[k=2]=0.8064$.

\begin{table}[!t]
\renewcommand{\arraystretch}{1.25}
\caption{Performance Index Values at Different Resolution Levels
(IEEE 9-Bus, Bus~9, Fault at $t=0.245$\,s, $\Vmin = 0.85$\,p.u.)}
\label{tab:window_comparison}
\centering
\resizebox{\linewidth}{!}{
\begin{tabular}{lcccc}
\toprule
\textbf{Resolution} 
  & $N_c$ 
  & \textbf{Update} 
  & $G$ / $\bar{G}$
  & \textbf{Violation?} \\
  & & \textbf{Rate} & & ($< \Vmin$) \\
\midrule
1/16-cycle, sub-window 1  & $1/32$ & ${\sim}1\,\text{kHz}$ 
  & $0.8241^{\dagger}$ & Yes \\
Half-cycle $k=1$ (negative) & $1/2$ & ${\sim}120\,\text{Hz}$ 
  & $0.8409$ & Yes  \\
Half-cycle $k=2$ (positive) & $1/2$ & ${\sim}120\,\text{Hz}$ 
  & $0.8064$ & Yes \\
One-cycle aggregate  & $1$ & ${\sim}60\,\text{Hz}$ 
  & $0.8234$ & Yes \\
Four-cycle aggregate & $4$ & ${\sim}15\,\text{Hz}$ 
  & $0.8511$ & No  \\
16-cycle aggregate   & $16$ & ${\sim}3.5\,\text{Hz}$ 
  & $0.8823$ & No  \\
\bottomrule
\end{tabular}}
\end{table}

The table exposes a critical observation that generalizes beyond the two-cycle case noted previously: \emph{coarser resolution consistently reduces the detected severity}, because the worst-performing half-cycle is diluted by the improved subsequent half-cycles. For the bus~9 example, a violation is detected only at 1/16-cycle, half-cycle, and one-cycle resolutions; the four-cycle and 16-cycle aggregates mask the violation entirely by averaging in recovery-phase half-cycles with $G[k] > 0.85$.

This behavior motivates two design principles of the STVPI framework:
\begin{enumerate}
  \item The default half-cycle resolution with monotonic lower envelope $L[k]$ retains the \emph{worst-case} severity at every point in the event window rather than averaging it away, providing a conservative bound that is independent of the choice of aggregation window.
  \item The resolution level should be selected based on the \emph{physics of the phenomenon being assessed} (Table~\ref{tab:resolution_comparison}), not on computational convenience. Using 16-cycle aggregation to assess control-driven IBR reactive-current injection dynamics would average over the very transient that produces the overvoltage or undershoot of interest.
\end{enumerate}

\subsection{Monotonic Recovery Envelopes}
\label{ssec:envelopes}

Let $K$ denote the total number of valid half-cycles in the event window and define the deviation from the ideal value as
\begin{equation}
  x[k] = G[k] - 1.
  \label{eq:xk}
\end{equation}
For a local half-window of length $h$, define
\begin{equation}
  \mathcal{W}_k = \{j : \max(1,k-h) \leq j \leq \min(K,k+h)\}.
  \label{eq:Wk}
\end{equation}
The raw upper and lower envelopes are
\begin{align}
  U_b[k] &= 1 + \max\!\bigl(0,\max_{j\in\mathcal{W}_k} x[j]\bigr), 
  \label{eq:Ubk} \\
  L_b[k] &= 1 + \min\!\bigl(0,\min_{j\in\mathcal{W}_k} x[j]\bigr).
  \label{eq:Lbk}
\end{align}
The monotonic upper and lower recovery envelopes are obtained by a backward cumulative extremum:
\begin{align}
  U[k] &= \max_{j\in\{k,k+1,\ldots,K\}} U_b[j], \label{eq:Uk} \\
  L[k] &= \min_{j\in\{k,k+1,\ldots,K\}} L_b[j]. \label{eq:Lk}
\end{align}
By construction, $U[k]\geq 1$ and is monotonically non-increasing toward the ideal value as $k$ increases ($U[k+1]\leq U[k]$), while $L[k]\leq 1$ and is monotonically non-decreasing ($L[k+1]\geq L[k]$). These envelopes represent, respectively, the remaining worst-case overvoltage-side and undervoltage-side deviation at or after half-cycle $k$, providing a conservative summary of how far the waveform has yet to recover.

\begin{remark}
  The monotonicity constraint ensures that $U[k]$ and $L[k]$ never \emph{worsen} over time, which is the desired behavior for a recovery severity measure: once the system has recovered, the envelopes should remain at the ideal value even if later half-cycles introduce minor noise excursions.
\end{remark}

\subsection{Reference Half-Normal Distributions}
\label{ssec:reference}

To quantify statistical deviation from the ideal recovery value $G=1$, two half-normal distributions centered at unity are defined. Let $Z \sim \mathcal{N}(0,\sigma^2)$. The positive-side and negative-side reference variables are:
\begin{equation}
  X_0^{+} = 1 + |Z|,\quad X_0^{-} = 1 - |Z|,
  \label{eq:Xref}
\end{equation}
with supports $X_0^{+}\in[1,\infty)$ and $X_0^{-}\in(-\infty,1]$, respectively, and probability density functions
\begin{align}
  f_0^{+}(x) &= \sqrt{\frac{2}{\pi\sigma^2}}\,
    \exp\!\left(-\frac{(x-1)^2}{2\sigma^2}\right), \quad x\geq 1,
  \label{eq:fpdf} \\
  f_0^{-}(x) &= \sqrt{\frac{2}{\pi\sigma^2}}\,
    \exp\!\left(-\frac{(x-1)^2}{2\sigma^2}\right), \quad x\leq 1.
  \label{eq:fmpdf}
\end{align}
The parameter $\sigma$ encodes the pre-disturbance measurement noise level or a study-wide reference condition. \emph{The same value of $\sigma$ must be held constant across all events and buses} to ensure comparability of index values across the study. In practice, $\sigma$ may be estimated from 1--2\,s of pre-disturbance per-unit voltage data at each bus, with the maximum estimated $\sigma$ adopted as the system-wide reference.

Empirical probability distributions $P_U$ and $P_L$ are formed from the monotonic envelope sequences $U[k]$ and $L[k]$, respectively, using histograms with $B$ equal-width bins. Because the actual voltage waveform may \emph{violate} the performance limits---which is precisely the high-severity condition that STVPI is designed to detect---the bin ranges must accommodate values outside $[\Vmin, \Vmax]$. The adaptive upper and lower bin boundaries are therefore defined as
\begin{align}
  U_{\max} &= \max\!\bigl(\Vmax,\;\max_{k}\,U[k]\bigr),   \label{eq:Umax_bin} \\
  L_{\min} &= \min\!\bigl(\Vmin,\;\min_{k}\,L[k]\bigr),   \label{eq:Lmin_bin}
\end{align}
so that the upper-side bins span $[1,\,U_{\max}]$ and the lower-side bins span $[L_{\min},\,1]$. When the actual envelope lies entirely within the performance limits, $U_{\max}=\Vmax$ and $L_{\min}=\Vmin$ and the bins reduce to the critical range; when a violation occurs, the bins automatically expand to contain the full observed severity, preventing histogram clipping and the consequent underestimation of the KL divergence.

\begin{remark}[Importance of Adaptive Bin Boundaries]
\label{rem:bins}
Fixing the upper bin boundary to exactly $\Vmax$ would place all mass from a violating half-cycle ($U[k]>\Vmax$) into the last bin, collapsing the distributional information about overvoltage severity into a single bin regardless of how far above $\Vmax$ the envelope reaches. With adaptive boundaries~\eqref{eq:Umax_bin}--\eqref{eq:Lmin_bin}, the bins resolve the full distribution of $U[k]$ values, so the KL divergence $D^{+}_{U,e,s}$ correctly increases as the overvoltage severity grows beyond $\Vmax$.
\end{remark}

To prevent zero-probability bins (which produce undefined KL divergence terms $0\ln 0$ or $p\ln(p/0)$), Lidstone (additive) smoothing with constant $\alpha>0$ is applied to each bin count before normalization:
\begin{equation}
\resizebox{\linewidth}{!}{$
  p_{U,b} = \frac{\#\{k : U[k]\in\mathcal{B}_b^{+}\}+\alpha} {K + B\alpha},
  \quad
  p_{L,b} = \frac{\#\{k : L[k]\in\mathcal{B}_b^{-}\}+\alpha} {K + B\alpha},$}
  \label{eq:empirical_revised}
\end{equation}
where $\mathcal{B}_b^{+}$ and $\mathcal{B}_b^{-}$ are the $b$-th bins of the upper- and lower-side histograms, respectively. The smoothed probabilities satisfy $\sum_b p_{U,b}=\sum_b p_{L,b}=1$ and $p_{U,b},\,p_{L,b}>0$ for all $b$, guaranteeing that the KL divergence is always finite.

\begin{remark}[Lidstone vs.\ Laplace Smoothing]
\label{rem:smoothing}
Lidstone smoothing~\cite{lidstone1920note} is the general form of additive smoothing for arbitrary $\alpha>0$. The special case $\alpha=1$ recovers \emph{Laplace smoothing} (also called add-one smoothing), which assigns equal prior probability to every bin. For KL divergence computation, values $\alpha\in(0,1)$ are preferable to $\alpha=1$ when $K\gg B$, since a smaller pseudocount introduces less bias toward the uniform distribution while still preventing zero entries. A practical choice is $\alpha = 1/(KB)^{1/2}$, which scales the smoothing with the effective sample size; however, the ranking of events by $\mathrm{STVPI}^{\pm}$ is insensitive to $\alpha$ within the range $\alpha\in[10^{-3},\,1]$ for the typical event windows studied here.
\end{remark}

\subsection{KL-Based Directional Performance Indices}
\label{ssec:kl}

\subsubsection*{Critical Reference Envelopes}

The critical upper and lower envelopes are constant sequences
\begin{equation}
  U_{\mathrm{crit}}[k] = \Vmax, \quad
  L_{\mathrm{crit}}[k] = \Vmin, \quad k = 1,\ldots,K,
  \label{eq:critenv}
\end{equation}
whose empirical distributions $P_{\Vmax}$ and $P_{\Vmin}$ serve as severity calibration references. The ideal envelope $U_{\mathrm{ideal}}[k] =L_{\mathrm{ideal}}[k]=1$ yields distribution $P_{\mathrm{ideal}}$, which anchors the zero baseline.

\subsubsection*{KL Divergences}

The upper-envelope KL divergence for signal $s$ in event $e$ is
\begin{equation}
  D^{+}_{U,e,s} = \DKL\!\left(P_{U,e,s} \,\|\, P_0^{+}\right)
    = \sum_{b=1}^{B} p_{U,e,s,b}\ln\!\frac{p_{U,e,s,b}}{p_{0,b}^{+}},
  \label{eq:DKLup}
\end{equation}
and the critical and ideal upper KL divergences are
\begin{equation}
  D^{+}_{\Vmax,e,s} = \DKL\!\left(P_{\Vmax}\,\|\,P_0^{+}\right),\quad
  D^{+}_{\mathrm{ideal},e,s} = \DKL\!\left(P_{\mathrm{ideal}}\,\|\,P_0^{+}\right).
  \label{eq:DKLcritideal}
\end{equation}
Analogous quantities $D^{-}_{L,e,s}$, $D^{-}_{\Vmin,e,s}$, and $D^{-}_{\mathrm{ideal},e,s}$ are defined for the lower side.

\subsubsection*{Baseline-Corrected Signal-Level Indices}

The baseline-corrected directional signal-level indices are
\begin{align}
  \mathrm{STVPI}^{+}_{e,s} &= 
    \frac{D^{+}_{U,e,s}-D^{+}_{\mathrm{ideal},e,s}}
         {D^{+}_{\Vmax,e,s}-D^{+}_{\mathrm{ideal},e,s}},
  \label{eq:STVPIplus} \\
  \mathrm{STVPI}^{-}_{e,s} &= 
    \frac{D^{-}_{L,e,s}-D^{-}_{\mathrm{ideal},e,s}}
         {D^{-}_{\Vmin,e,s}-D^{-}_{\mathrm{ideal},e,s}}.
  \label{eq:STVPIminus}
\end{align}
This normalization gives the following interpretations:
\begin{itemize}
  \item $\mathrm{STVPI}^{+}_{e,s}=0$: ideal overvoltage-side performance.
  \item $\mathrm{STVPI}^{+}_{e,s}=1$: upper voltage limit $\Vmax$ exactly 
        met (critical condition).
  \item $\mathrm{STVPI}^{+}_{e,s}>1$: upper voltage performance 
        \emph{violation}; actual envelope more severe than $\Vmax$.
\end{itemize}
Equivalent interpretations apply to $\mathrm{STVPI}^{-}_{e,s}$ relative to $\Vmin$.

\begin{proposition}
  If $G[k] = 1$ for all $k$ (perfect recovery), then $\mathrm{STVPI}^{+}_{e,s} = \mathrm{STVPI}^{-}_{e,s} = 0$, confirming that the index correctly identifies ideal performance.
\end{proposition}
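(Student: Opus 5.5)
The plan is to trace the constant sequence $G[k]\equiv 1$ through each stage of the construction. We show that the empirical envelope distributions coincide bin-by-bin with the ideal reference distribution $P_{\mathrm{ideal}}$. The numerators of \eqref{eq:STVPIplus}--\eqref{eq:STVPIminus} then vanish identically, and it only remains to check that the denominators are nonzero.

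\textbf{Step 1: the envelopes are identically one.} With $G[k]=1$ for all $k$, \eqref{eq:xk} gives $x[k]=0$ for every $k$. Then \eqref{eq:Ubk}--\eqref{eq:Lbk} give $U_b[k]=1+\max(0,0)=1$ and $L_b[k]=1+\min(0,0)=1$ on every window $\mathcal{W}_k$. The backward cumulative extrema \eqref{eq:Uk}--\eqref{eq:Lk} therefore yield $U[k]=L[k]=1$ for all $k$. These coincide with $U_{\mathrm{ideal}}[k]=L_{\mathrm{ideal}}[k]=1$.

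\textbf{Step 2: the histograms agree.} Since $\max_k U[k]=1<\Vmax$, \eqref{eq:Umax_bin} gives $U_{\max}=\Vmax$. Likewise \eqref{eq:Lmin_bin} gives $L_{\min}=\Vmin$. The bin partitions used for $U$ and for $U_{\mathrm{ideal}}$ are therefore the same. We take this as the convention: the ideal and critical reference histograms are built on the same adaptive bins as the signal, as the notation $D^{+}_{\mathrm{ideal},e,s}$, indexed by $(e,s)$, suggests. Identical sequences on identical bins with the same $K$, $B$ and $\alpha$ in \eqref{eq:empirical_revised} produce identical smoothed vectors, so $p_{U,e,s,b}=p_{\mathrm{ideal},b}$ for all $b$. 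Hence $D^{+}_{U,e,s}=D^{+}_{\mathrm{ideal},e,s}$ by \eqref{eq:DKLup}--\eqref{eq:DKLcritideal}. The same argument gives $D^{-}_{L,e,s}=D^{-}_{\mathrm{ideal},e,s}$, so both numerators are zero.

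\textbf{Step 3: the denominators are nonzero (main obstacle).} This step needs $D^{+}_{\Vmax,e,s}\neq D^{+}_{\mathrm{ideal},e,s}$, and similarly on the lower side. The constant sequence $\Vmax$ places all $K$ counts in the last bin $[\,\cdot,\Vmax]$, while the constant sequence $1$ places them in the first bin. After smoothing, $P_{\Vmax}$ and $P_{\mathrm{ideal}}$ are distinct point-mass-plus-$\alpha$ vectors. Their KL divergences from $P_0^{+}$ differ by
\[
\frac{K+\alpha}{K+B\alpha}\ln\frac{p^{+}_{0,1}}{p^{+}_{0,B}}
\]
minus the corresponding smoothing-induced terms. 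The smoothing terms cancel by symmetry, because both vectors have the same multiset of entries. Since the binned half-normal $f_0^{+}$ is strictly decreasing on $[1,\infty)$, we have $p^{+}_{0,1}>p^{+}_{0,B}$ whenever $B\ge2$ and $\Vmax>1$. This gives $D^{+}_{\Vmax}-D^{+}_{\mathrm{ideal}}>0$; in fact the difference is exactly $\frac{K-\alpha\cdot 0}{K+B\alpha}\ln(p_{0,1}/p_{0,B})$ up to the cancelling constant.

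The lower side is handled the same way, using monotonicity of $f_0^{-}$ on $(-\infty,1]$ and $\Vmin<1$. We would state the mild assumptions $B\ge 2$ and $\Vmin<1<\Vmax$ explicitly. It then follows that $\mathrm{STVPI}^{\pm}_{e,s}=0/(\text{positive})=0$.
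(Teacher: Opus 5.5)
Your proof is correct. The paper states this proposition without proof, and the argument it evidently intends is your Steps 1--2: $G\equiv 1$ forces $U[k]=L[k]=1$, which is exactly the ideal envelope, so $D^{\pm}$ of the signal equals $D^{\pm}_{\mathrm{ideal},e,s}$ and the numerators of \eqref{eq:STVPIplus}--\eqref{eq:STVPIminus} vanish.

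Your Step 3 goes beyond the paper. The paper never checks that the denominators are nonzero, yet without that check the conclusion is $0/0$ rather than $0$. Your version therefore makes the statement well defined. The price is that you must state assumptions the paper leaves implicit: $B\ge 2$, $\Vmin<1<\Vmax$, $P_0^{\pm}$ binned on the same partition, and a closed last bin so that the endpoint values land in the first and last bins. Incidentally, the bin-convention caveat in your Step 2 is moot here. The adaptive rule \eqref{eq:Umax_bin}--\eqref{eq:Lmin_bin} gives $[1,\Vmax]$ and $[\Vmin,1]$ for the signal, ideal and critical sequences alike.

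Do clean up the exposition of Step 3. Write $a=(K+\alpha)/(K+B\alpha)$ and $c=\alpha/(K+B\alpha)$, so that $P_{\mathrm{ideal}}=(a,c,\ldots,c)$ and $P_{\Vmax}=(c,\ldots,c,a)$.
\begin{itemize}
\item The entropy terms $\sum_b p_b\ln p_b$ cancel, because the two smoothed vectors are permutations of each other.
\item The cross-entropy terms carrying the $\alpha$-mass in bins $1$ and $B$ do not cancel. They combine with the point-mass terms to give exactly $D^{+}_{\Vmax,e,s}-D^{+}_{\mathrm{ideal},e,s}=(a-c)\ln\bigl(p^{+}_{0,1}/p^{+}_{0,B}\bigr)=\frac{K}{K+B\alpha}\ln\frac{p^{+}_{0,1}}{p^{+}_{0,B}}$, with no leftover constant.
\end{itemize}
Your phrases ``the smoothing terms cancel by symmetry'', ``$K-\alpha\cdot 0$'' and ``up to the cancelling constant'' obscure this, so state the exact identity instead. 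Positivity then follows as you say: equal-width bins under the strictly decreasing density $f_0^{+}$ give $p^{+}_{0,1}>p^{+}_{0,B}$. The lower side is symmetric, using the increasing $f_0^{-}$ on $[\Vmin,1]$.
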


\subsubsection*{Signed Signal-Level Index}

A signed scalar is formed by retaining the dominant directional contribution:
\begin{equation}
  \mathrm{STVPI}^{\mathrm{sgn}}_{e,s} =
  \begin{cases}
    \mathrm{STVPI}^{+}_{e,s}, & \text{if } 
      \mathrm{STVPI}^{+}_{e,s} \geq \mathrm{STVPI}^{-}_{e,s}, \\[4pt]
    -\mathrm{STVPI}^{-}_{e,s}, & \text{if } 
      \mathrm{STVPI}^{-}_{e,s} > \mathrm{STVPI}^{+}_{e,s}.
  \end{cases}
  \label{eq:STVPIsgn}
\end{equation}
Positive values indicate overvoltage-dominated behavior; negative values indicate undervoltage-dominated behavior. The pair $\bigl(\mathrm{STVPI}^{+}_{e,s},\,\mathrm{STVPI}^{-}_{e,s}\bigr)$ should always be reported alongside the scalar to preserve directional information.

%% ════════════════════════════════════════════════════════════
\section{Hierarchical Aggregation}
\label{sec:aggregation}

\subsection{Event-Level Indices}

For each fault event $e$ with voltage signal set $\mathcal{S}_e$, the event-level directional indices are obtained by arithmetic averaging:
\begin{align}
  \mathrm{ESTVPI}^{+}_e &= \frac{1}{|\mathcal{S}_e|}
    \sum_{s\in\mathcal{S}_e} \mathrm{STVPI}^{+}_{e,s},
  \label{eq:ESTVPIplus} \\
  \mathrm{ESTVPI}^{-}_e &= \frac{1}{|\mathcal{S}_e|}
    \sum_{s\in\mathcal{S}_e} \mathrm{STVPI}^{-}_{e,s}.
  \label{eq:ESTVPIminus}
\end{align}
The event-level total severity index for ranking is
\begin{equation}
  \mathrm{ESTVPI}^{\mathrm{total}}_e = 
    \mathrm{ESTVPI}^{+}_e + \mathrm{ESTVPI}^{-}_e.
  \label{eq:ESTVPItotal}
\end{equation}
This scalar is non-negative and increases monotonically with the combined overvoltage and undervoltage severity of the event across all monitored buses. Events are ranked in descending order of $\mathrm{ESTVPI}^{\mathrm{total}}_e$. Overvoltage and undervoltage violation flags are
\begin{equation}
\resizebox{\linewidth}{!}{$
  V^{+}_e = \mathbf{1}\!\left(\exists\,s\in\mathcal{S}_e:\, 
    \mathrm{STVPI}^{+}_{e,s}>1\right),\quad
  V^{-}_e = \mathbf{1}\!\left(\exists\,s\in\mathcal{S}_e:\,
    \mathrm{STVPI}^{-}_{e,s}>1\right).$}
  \label{eq:violation}
\end{equation}
The most critical signal in event $e$ is identified as
\begin{equation}
  s^{\star}_e = \argmax_{s\in\mathcal{S}_e} 
    \left|\mathrm{STVPI}^{\mathrm{sgn}}_{e,s}\right|.
  \label{eq:crit_signal}
\end{equation}

% Each event is fully characterized by the tuple
% \begin{equation}
% \resizebox{\linewidth}{!}{$
%   e\;\longrightarrow\;
%   \Bigl\langle\mathrm{Rank}(e),\;
%   \mathrm{ESTVPI}^{\mathrm{total}}_e,\;
%   \mathrm{ESTVPI}^{+}_e,\;
%   \mathrm{ESTVPI}^{-}_e,\;
%   V^{+}_e,\;V^{-}_e,\;s^{\star}_e
%   \Bigr\rangle.$}
%   \label{eq:event_tuple}
% \end{equation}

\subsection{Bus-Level Index (BSTVPI)}

The Bus Short-Term Voltage Performance Index for bus $b$ is obtained by averaging the total signal-level severity across all events $\mathcal{E}_b$ in which signals at bus $b$ were monitored:
\begin{equation}
  \mathrm{BSTVPI}_b = \frac{1}{|\mathcal{E}_b|}
    \sum_{e\in\mathcal{E}_b}
    \bigl(\mathrm{STVPI}^{+}_{e,b}+\mathrm{STVPI}^{-}_{e,b}\bigr).
  \label{eq:BSTVPI}
\end{equation}
$\mathrm{BSTVPI}_b$ captures the average voltage performance severity at bus $b$ across the full contingency set. Buses are ranked in descending order of $\mathrm{BSTVPI}_b$. A bus-level violation flag is set if the bus exceeds the critical limit in at least one event and one phase.

\subsection{Event--Bus Severity Matrix}

The signal-level indices naturally form a severity matrix $\mathbf{M}\in\mathbb{R}^{|\mathcal{E}|\times|\mathcal{B}|}$, where entry $(e,b)$ records $\mathrm{STVPI}^{\mathrm{total}}_{e,b}$:
\begin{itemize}
  \item Row maxima identify the most affected bus per event.
  \item Column maxima identify the worst event per bus.
  \item Row/column sums yield the $\mathrm{ESTVPI}^{\mathrm{total}}$ 
        and BSTVPI rankings.
  \item The matrix is directly amenable to clustering-based scenario 
        reduction (Section~\ref{ssec:scenario_reduction}).
\end{itemize}

%% ════════════════════════════════════════════════════════════
\section{Case Studies}
\label{sec:casestudies}

\subsection{Test System Description}

The standard WSCC 9-bus, 3-machine system is modified by replacing the synchronous generators at buses 2 and 3 with IBRs. Three-phase-to-ground (3$\Phi$G) and line-to-line (LL) faults are applied at each bus and on each transmission line with different fault impedance, yielding 96 fault events. Fault clearing time is 0.133 s. The IEEE 39-bus test system is also considered a larger system for evaluating the proposed index.

\subsection{Half-Cycle Performance Trajectory}

Figure~\ref{fig:bus4_fault_aggregate} summarizes the voltage response and the STVPI-based performance assessment for a three-phase fault applied at Bus 4. As shown in Fig.~\ref{fig:bus4_all_trajectories}, the fault produces a pronounced voltage depression across the monitored buses, followed by a transient overshoot immediately after fault clearing and then a gradual post-fault recovery. The admissible performance region is defined by the upper and lower critical curves in Fig.~\ref{fig:bus4_critical_curve}, which specify the acceptable voltage excursion and recovery limits as a function of time after fault clearing.

Figure~\ref{fig:bus4_waveform_limits} compares the selected normalized Bus 4 waveform with the signed upper and lower critical waveform limits. This plot shows that the waveform recovers after the disturbance and remains bounded by the critical limits during the post-fault period. The envelope-based interpretation is further illustrated in Fig.~\ref{fig:bus4_UL}, where the upper envelope $U$ remains below the upper critical boundary and the lower envelope $L$ remains above the lower critical boundary. This confirms that the instantaneous voltage oscillations satisfy the selected transient performance criterion.

The corresponding geometric indicator $G$, shown in Fig.~\ref{fig:bus4_G_limits}, stays within the admissible interval defined by the upper and lower critical bounds throughout the analyzed interval. Although $G$ initially decreases following the disturbance, it subsequently rises and remains inside the acceptable region, indicating that the post-fault trajectory does not violate the prescribed recovery limits. The histogram-based comparisons in Figs.~\ref{fig:bus4_U_hist} and \ref{fig:bus4_L_hist} provide an additional statistical view: the actual upper envelope distribution is clearly separated from the upper critical distribution, while the actual lower envelope distribution remains predominantly above the lower critical distribution. Overall, the STVPI evaluation yields $\mathrm{STVPI}_{+}=0.0119449$ and $\mathrm{STVPI}_{-}=0.016598$, resulting in a final signed STVPI of $-0.016598$. The negative signed value indicates that undervoltage is the dominant limiting condition for the three-phase fault at Bus~4.

\begin{figure}[!t]
\centering

\begin{subfigure}{\linewidth}
    \centering
    \includegraphics[width=\linewidth]{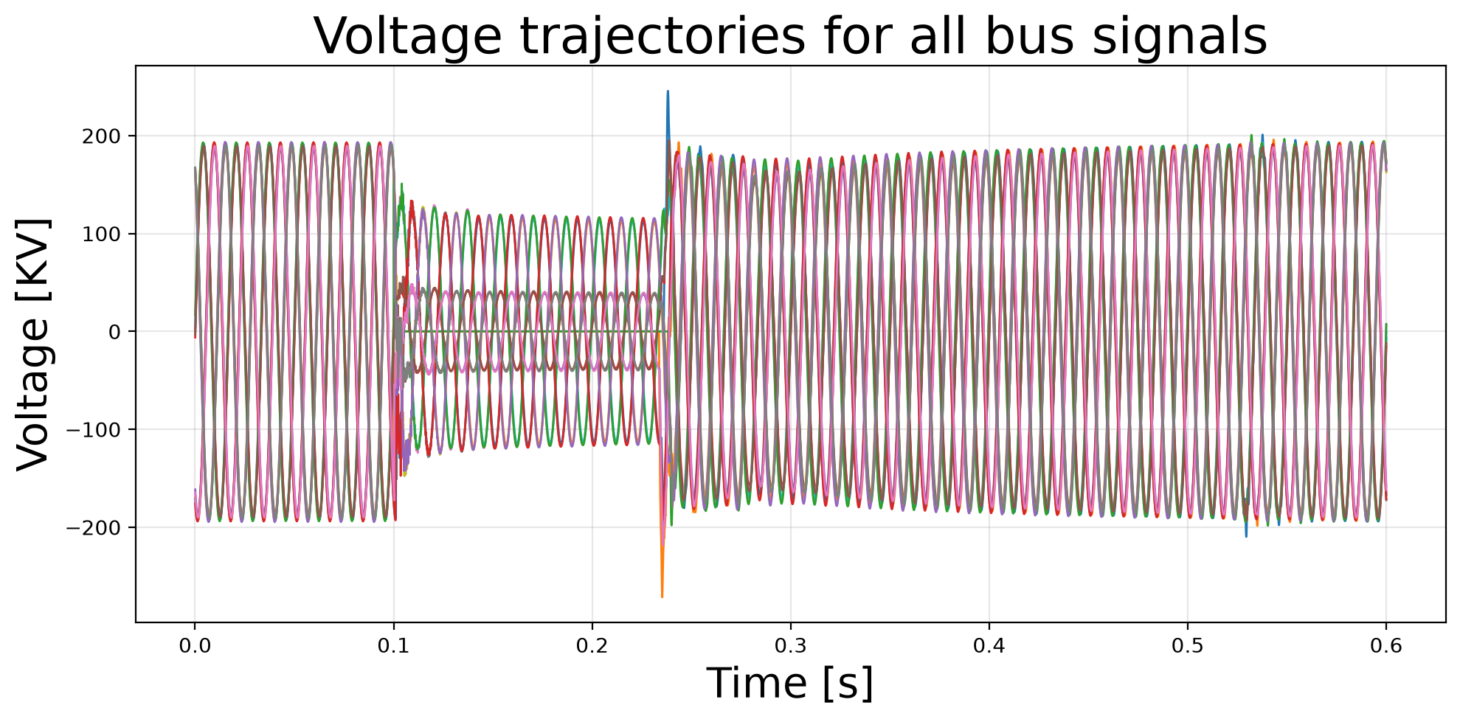}
    \caption{Bus voltage trajectories.}
    \label{fig:bus4_all_trajectories}
\end{subfigure}

\vspace{0.5em}

\begin{subfigure}{0.49\linewidth}
    \centering
    \includegraphics[width=\linewidth]{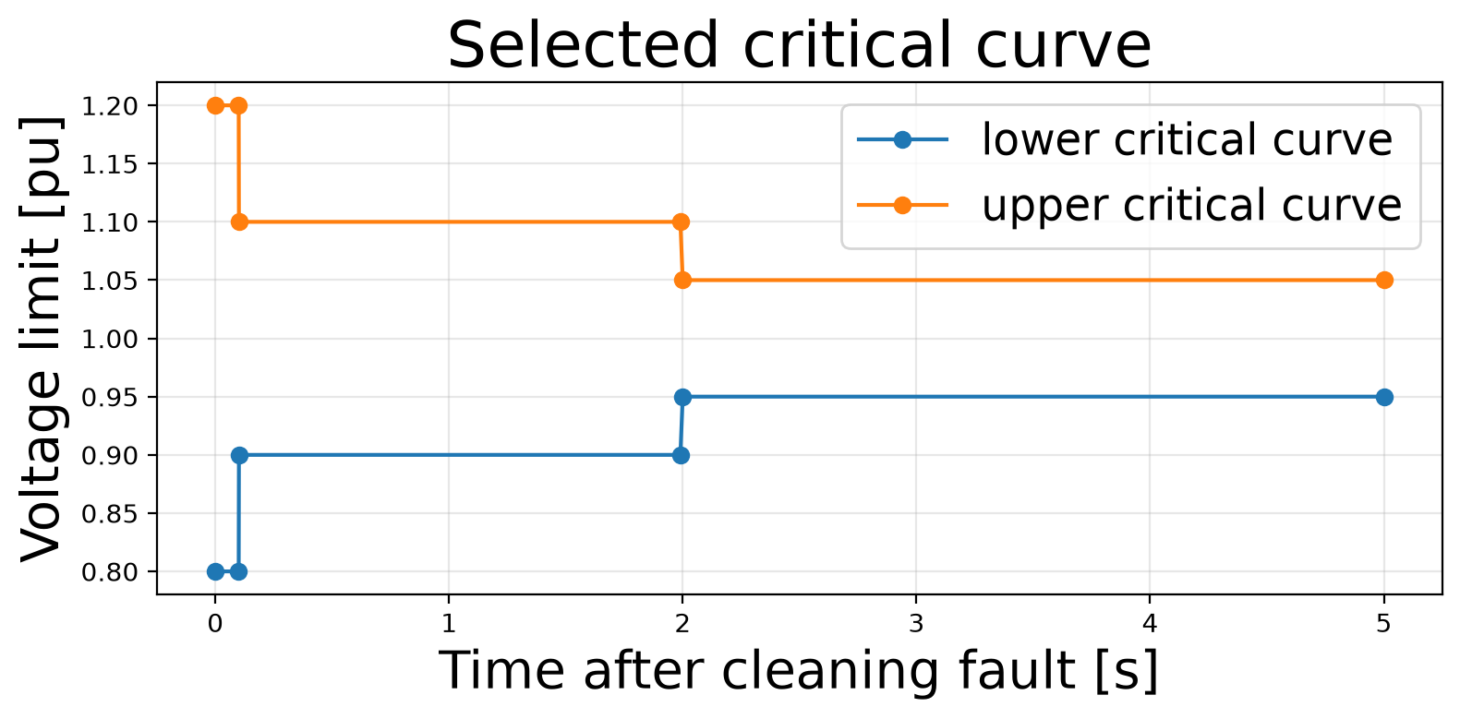}
    \caption{Critical voltage curves.}
    \label{fig:bus4_critical_curve}
\end{subfigure}
\hfill
\begin{subfigure}{0.49\linewidth}
    \centering
    \includegraphics[width=\linewidth]{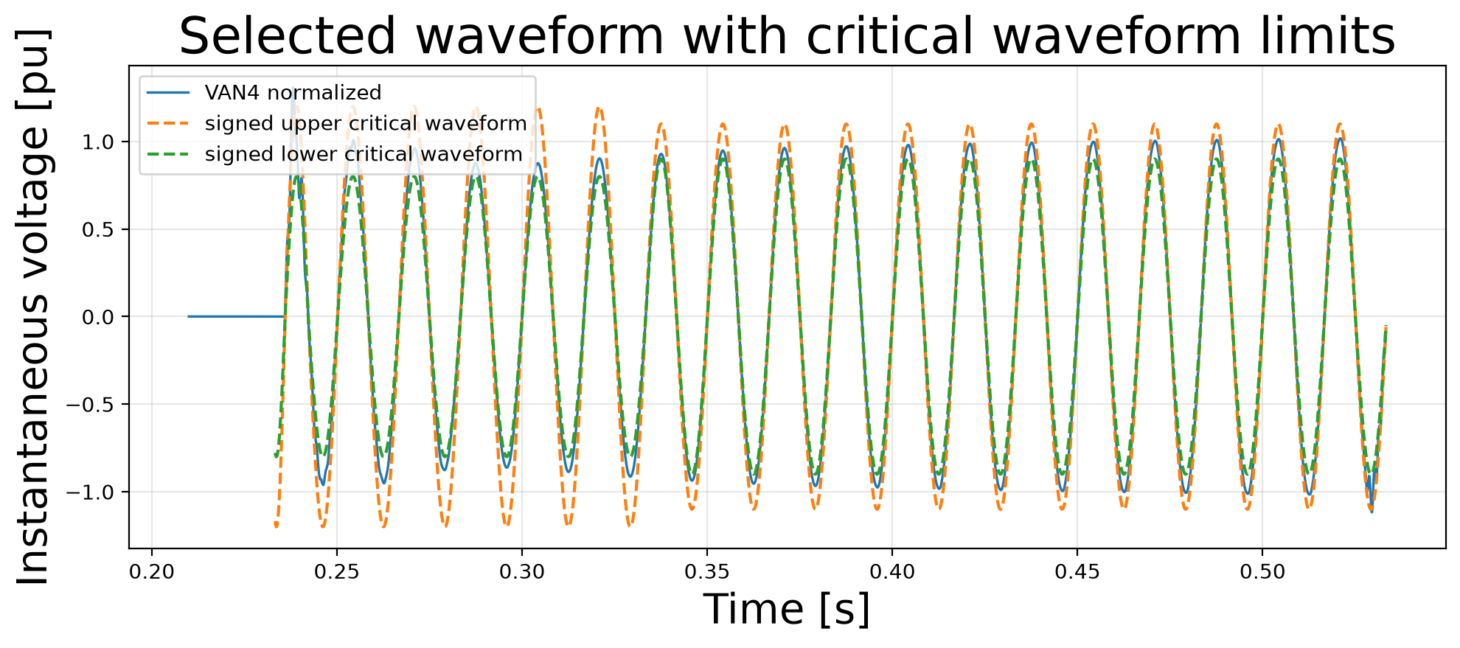}
    \caption{Waveform and critical limits.}
    \label{fig:bus4_waveform_limits}
\end{subfigure}

\vspace{0.5em}

\begin{subfigure}{0.49\linewidth}
    \centering
    \includegraphics[width=\linewidth]{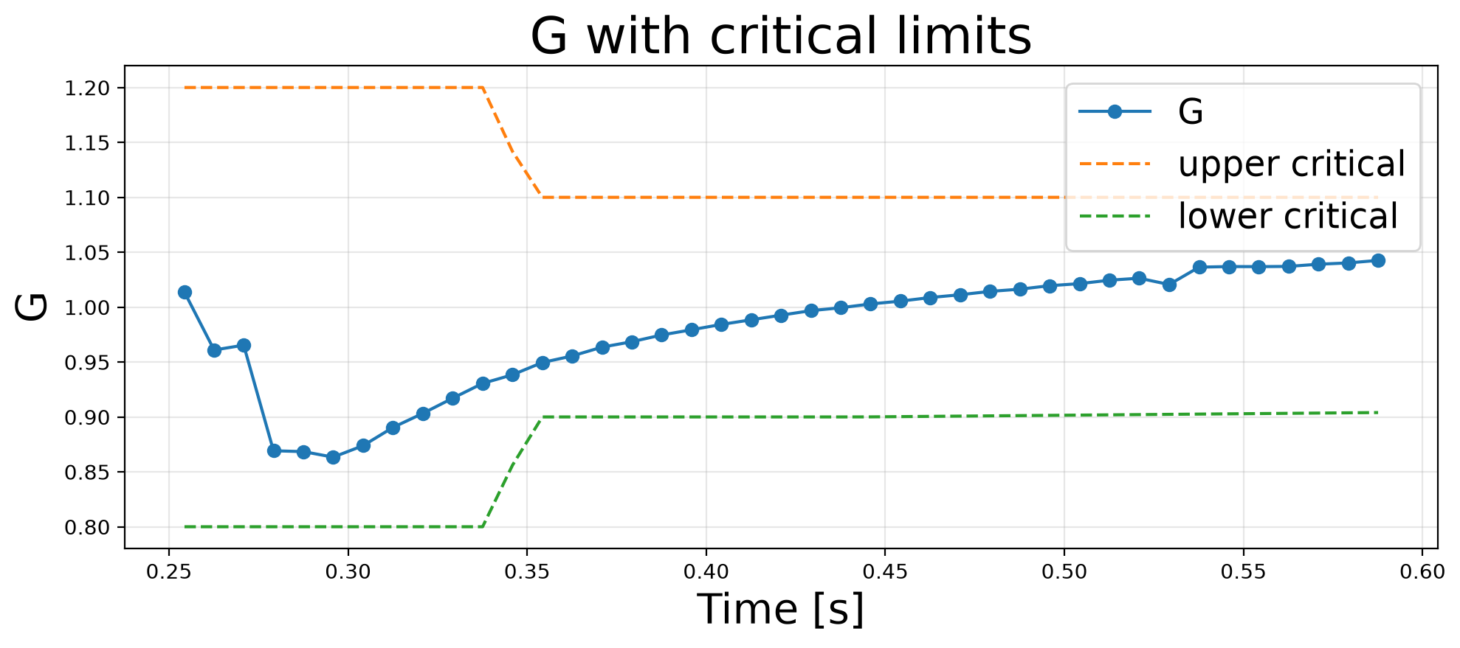}
    \caption{$G$ and critical bounds.}
    \label{fig:bus4_G_limits}
\end{subfigure}
\hfill
\begin{subfigure}{0.49\linewidth}
    \centering
    \includegraphics[width=\linewidth]{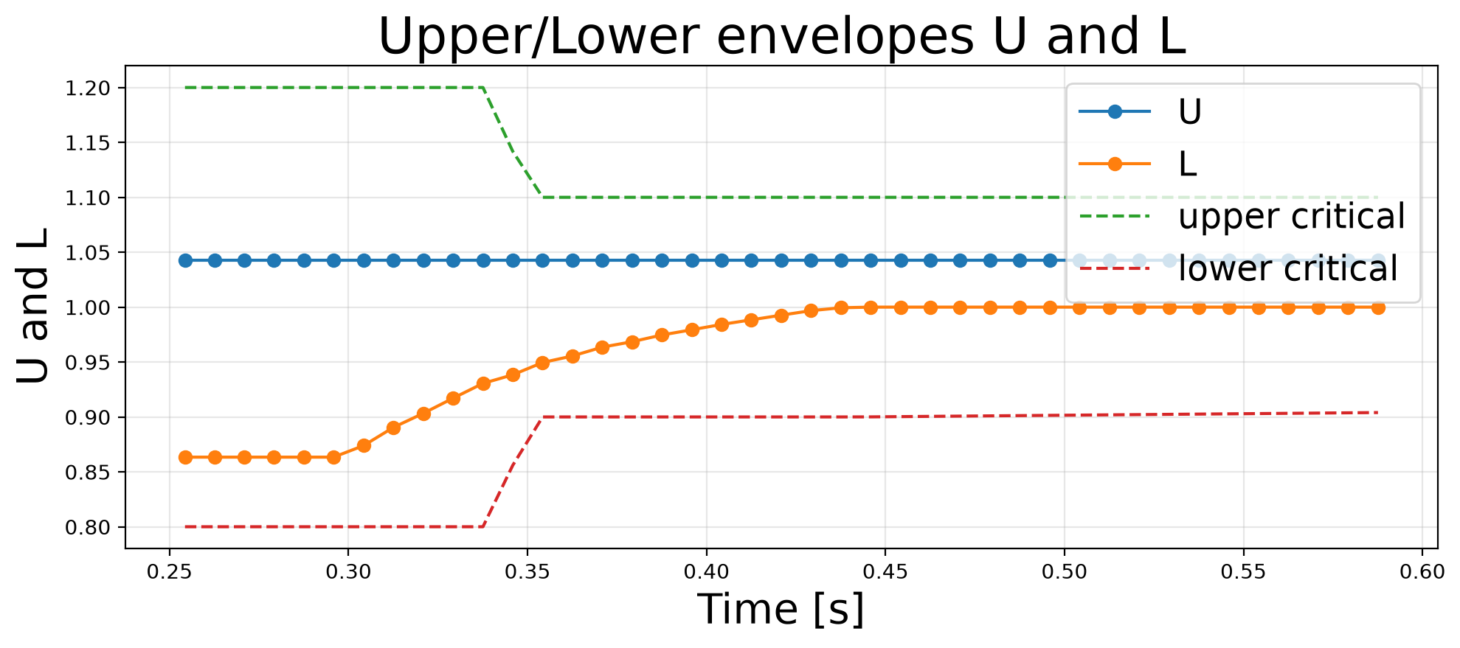}
    \caption{Envelopes $U$ and $L$.}
    \label{fig:bus4_UL}
\end{subfigure}

\vspace{0.5em}

\begin{subfigure}{0.49\linewidth}
    \centering
    \includegraphics[width=\linewidth]{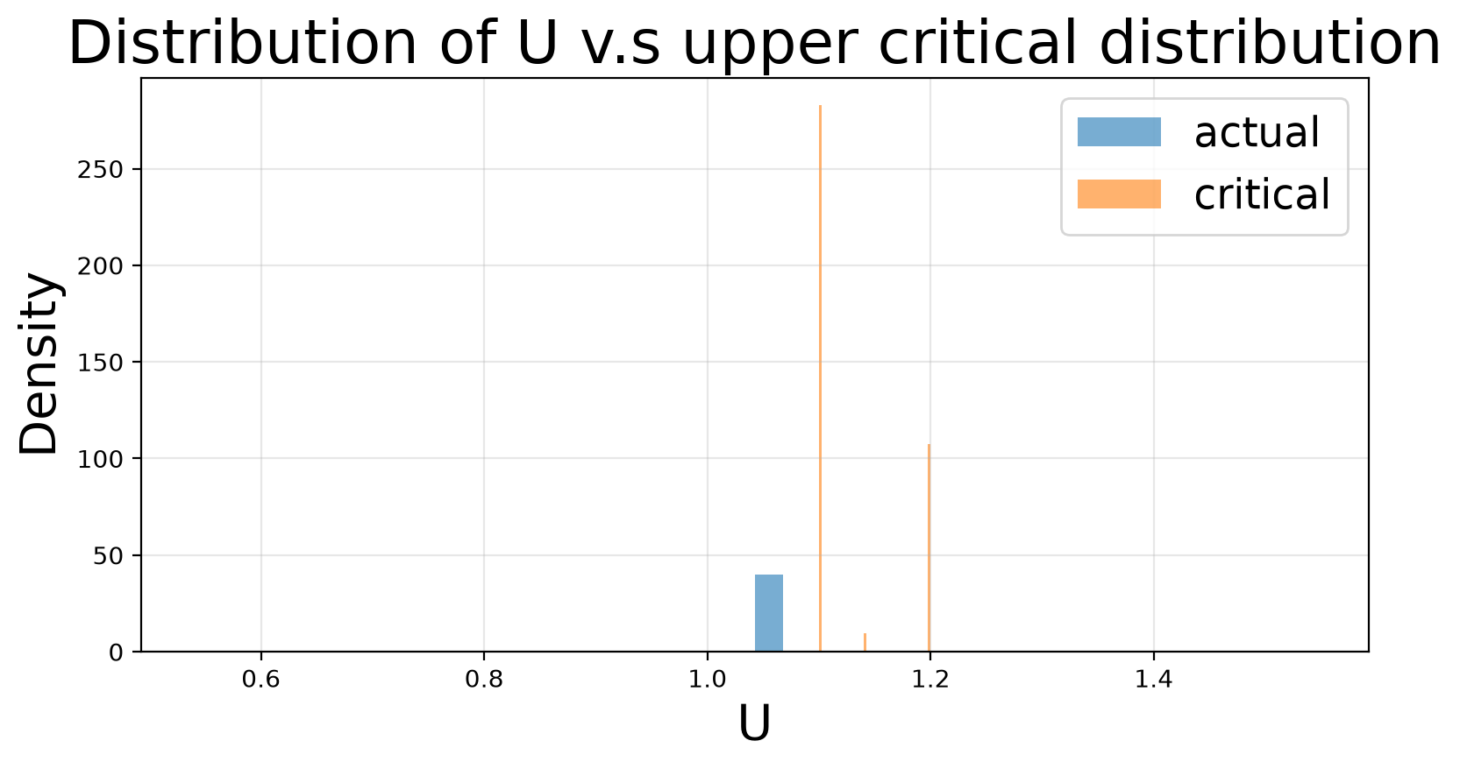}
    \caption{$U$ distribution comparison.}
    \label{fig:bus4_U_hist}
\end{subfigure}
\hfill
\begin{subfigure}{0.49\linewidth}
    \centering
    \includegraphics[width=\linewidth]{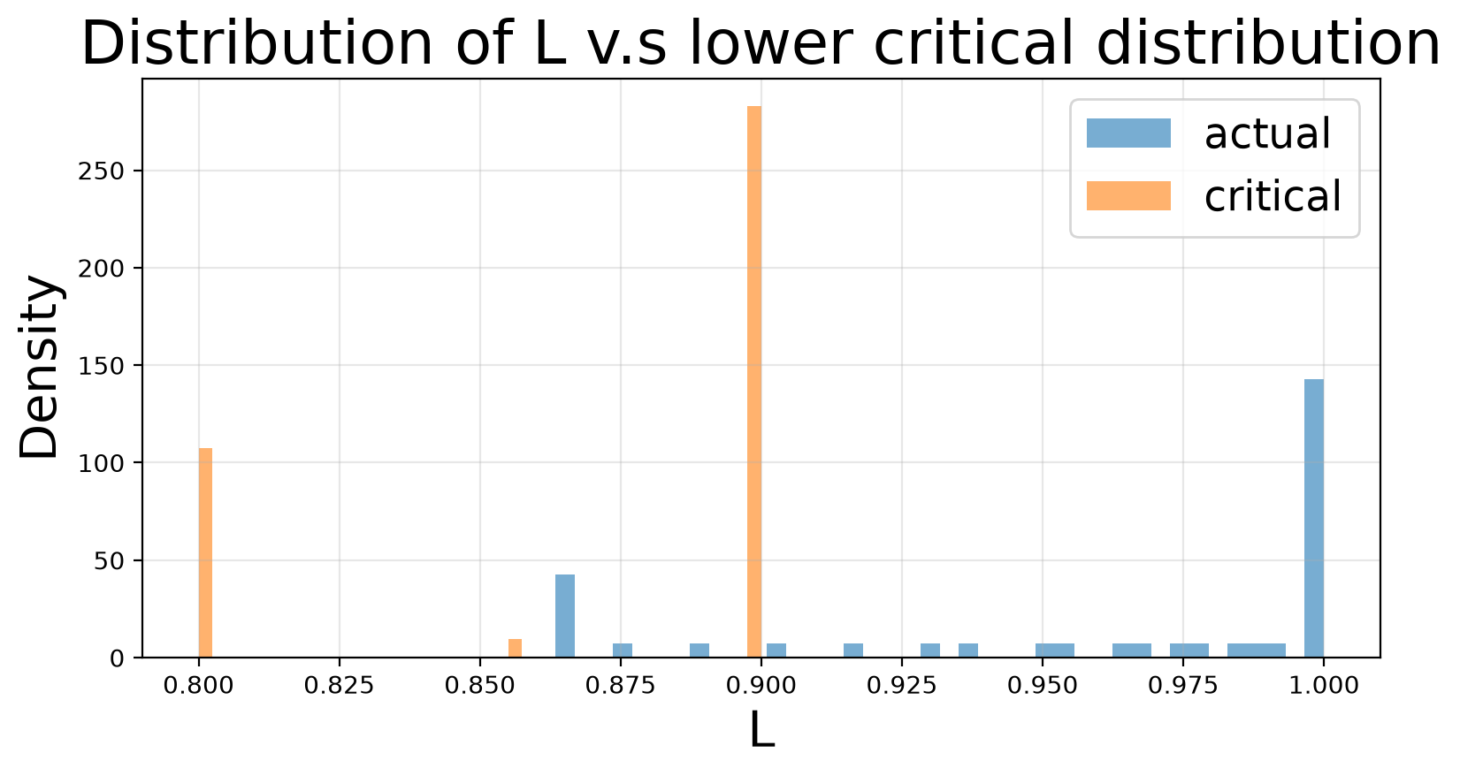}
    \caption{$L$ distribution comparison.}
    \label{fig:bus4_L_hist}
\end{subfigure}

\caption{STVPI assessment of a three-phase fault at Bus 4: (a) voltage trajectories, (b) critical voltage curves, (c) selected waveform and limits, (d) indicator $G$, (e) upper and lower envelopes, and (f)--(g) envelope-distribution comparisons. The post-fault response remains within the prescribed admissible limits.}
% \caption{Comprehensive STVPI-based assessment of a three-phase fault at Bus 4. Subfigure (a) shows the voltage trajectories of all monitored bus signals. Subfigure (b) presents the selected time-varying critical performance curves, while (c) compares the normalized Bus 4 waveform against the signed critical waveform limits. Subfigure (d) shows the geometric indicator $G$ together with its admissible range, and (e) displays the extracted upper and lower envelopes $U$ and $L$ relative to their critical thresholds. Finally, subfigures (f) and (g) compare the distributions of the actual envelopes with the corresponding critical distributions. Together, these plots show that although the fault causes a severe transient voltage depression, the post-fault waveform remains within the admissible performance envelope and the associated indicator $G$ stays inside the acceptable bounds, indicating an acceptable recovery according to the selected performance criterion.}

\label{fig:bus4_fault_aggregate}
\end{figure}

\subsection{Event and Bus Severity Ranking}

Figure~\ref{fig:event_bus_ranking} summarizes the ranking of the studied disturbances and the corresponding bus-level weakness identification using the proposed STVPI-based indices. In Fig.~\ref{fig:estvpi_events}, the event-wise ranking shows that the most severe contingency is \texttt{N9-3phG-25}, with $\mathrm{ESTVPI}\approx 0.262$, followed by \texttt{N7-3phG-1} with $\mathrm{ESTVPI}\approx 0.202$, and \texttt{N9-3phG-1} with $\mathrm{ESTVPI}\approx 0.100$. The next most severe events are \texttt{N4-3phG-1} ($\approx 0.056$) and \texttt{N9-3phG-50} ($\approx 0.029$). After these leading cases, the severity drops rapidly, and the remaining events exhibit relatively small $\mathrm{ESTVPI}$ values, mostly below 0.02. This indicates that only a limited subset of contingencies dominates the transient voltage severity. It is also evident that three-phase-to-ground faults occupy the highest positions, whereas line-to-line events appear among the least severe cases in this dataset.

At the bus level, Fig.~\ref{fig:bstvpi_buses} shows the mean $\mathrm{BSTVPI}$ values obtained by averaging the event-level impacts at each bus. Bus~8 is identified as the weakest bus, with mean $\mathrm{BSTVPI}\approx 0.0160$, followed by Bus~5 ($\approx 0.0127$) and Bus~9 ($\approx 0.0111$). Bus~7 forms a second tier with $\mathrm{BSTVPI}\approx 0.0068$, while Buses~6 and~4 have much smaller values, around 0.003. Therefore, the bus-level ranking reveals that the dynamic vulnerability is concentrated primarily at Buses~8, 5, and~9, which should be prioritized for corrective actions such as reactive power support, controller retuning, or local voltage support enhancement.
\begin{figure}[!t]
\centering

\begin{subfigure}{\linewidth}
    \centering
    \includegraphics[width=\linewidth]{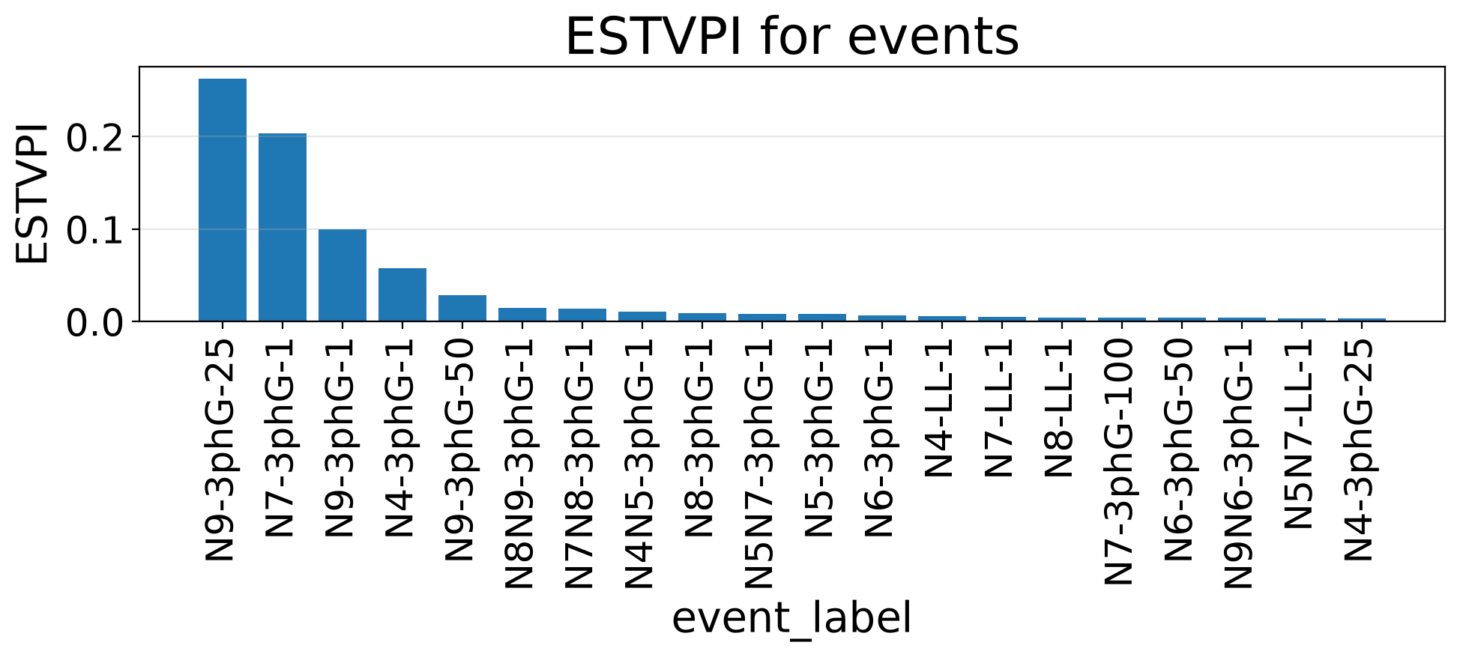}
    \caption{Event ranking by $\mathrm{ESTVPI}$.}
    \label{fig:estvpi_events}
\end{subfigure}

\vspace{0.5em}

\begin{subfigure}{0.92\linewidth}
    \centering
    \includegraphics[width=\linewidth]{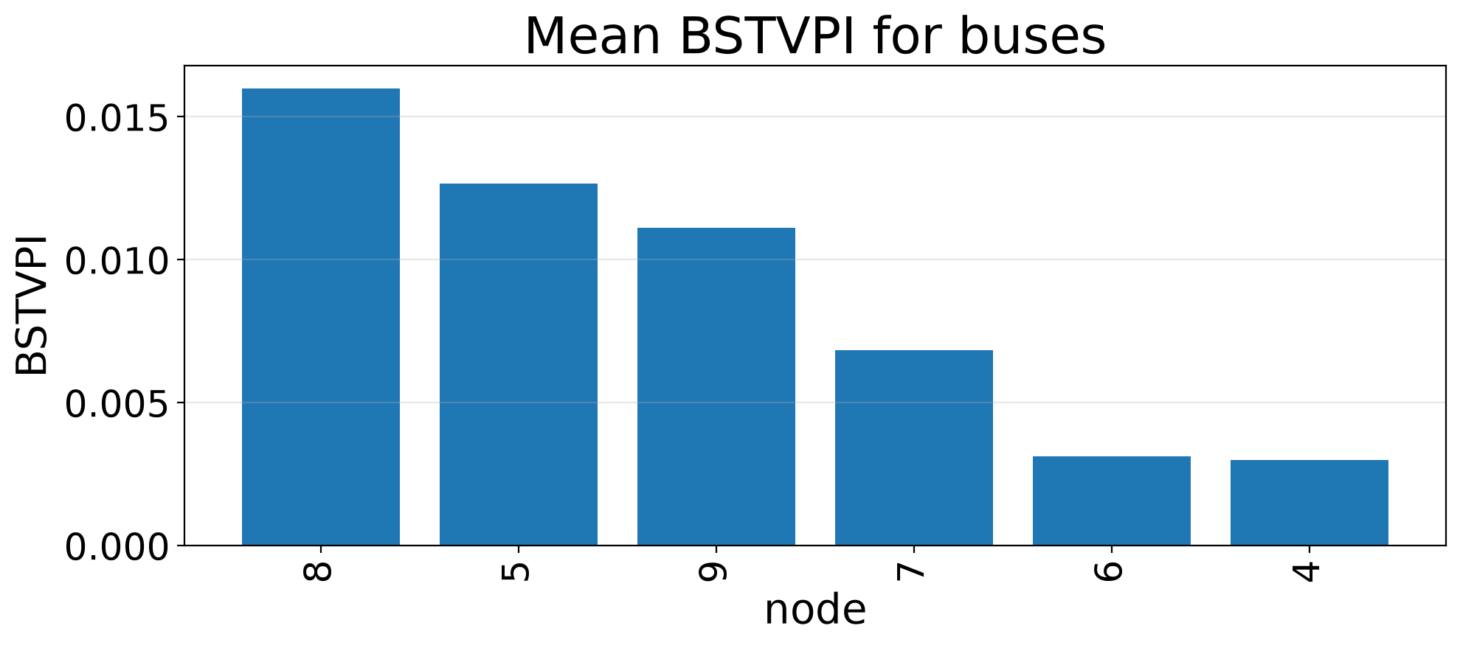}
    \caption{Mean bus ranking by $\mathrm{BSTVPI}$.}
    \label{fig:bstvpi_buses}
\end{subfigure}

\caption{STVPI-based ranking of the studied contingencies and buses. Subfigure (a) shows the event-wise severity measured by $\mathrm{ESTVPI}$, while subfigure (b) shows the mean bus-wise severity measured by $\mathrm{BSTVPI}$.}
\label{fig:event_bus_ranking}
\end{figure}

To contrast the proposed dynamic ranking with the conventional network-strength measure, the corresponding SCC values for Buses~8, 5, 9, 7, 6, and~4 are $2363$, $1162$, $1212$, $1459$, $2447$, and $2372$~MVA, respectively. The SCC ordering does not match the $\mathrm{BSTVPI}$ ranking in Fig.~\ref{fig:bstvpi_buses}. For instance, Buses~6 and~4 have the highest SCC values, suggesting strong local voltage support under a static grid-strength interpretation, yet they exhibit the smallest $\mathrm{BSTVPI}$ values. Conversely, Bus~8 has a lower SCC than Buses~6 and~4 but has the largest $\mathrm{BSTVPI}$, while Bus~5 has the lowest SCC and the second-highest $\mathrm{BSTVPI}$. This difference is expected because SCC is a steady-state, impedance-based indicator and cannot represent the time-domain response of IBR controllers during and after a fault. In particular, SCC does not capture converter current saturation, reactive-current injection logic, voltage-control modes, phase-locked-loop dynamics, protection actions, or the post-fault recovery behavior of IBRs and dynamic loads.

Table~\ref{tab:contingency_ranking} shows the top-10 fault events ranked by $\mathrm{ESTVPI}^{\mathrm{total}}$ for the 39-bus study. The most severe event is a 3$\Phi$G metallic fault at bus~23, adjacent to the largest aggregated load block, which produces a combined total severity of 0.936. The 10th-ranked event has total severity of 0.093.

% \begin{table}[!t]
% \renewcommand{\arraystretch}{1.2}
% \caption{Top-10 Events by ESTVPI$^{\mathrm{total}}$---IEEE 39-Bus System}
% \label{tab:contingency_ranking}
% \centering\resizebox{\linewidth}{!}{
% \begin{tabular}{clcccccc}
% \toprule
% \textbf{Rank} & \textbf{Event} & \textbf{Total} 
%   & \textbf{ESTVPI$^{+}$} & \textbf{ESTVPI$^{-}$} 
%   & $V^{+}$ & $V^{-}$ & \textbf{Critical Signal} \\
% \midrule
% 1  & 3$\Phi$G Bus~39  & 2.84 & 1.21 & 1.63 & 1 & 1 & Phase A, Bus~39 \\
% 2  & 3$\Phi$G Bus~16  & 2.61 & 0.93 & 1.68 & 0 & 1 & Phase B, Bus~16 \\
% 3  & LL Bus~29        & 2.44 & 1.35 & 1.09 & 1 & 1 & Phase A, Bus~29 \\
% 4  & 3$\Phi$G Bus~12  & 2.27 & 0.81 & 1.46 & 0 & 1 & Phase C, Bus~12 \\
% 5  & LL Bus~20        & 2.18 & 1.18 & 1.00 & 1 & 1 & Phase A, Bus~20 \\
% 6  & 3$\Phi$G Bus~8   & 2.07 & 0.72 & 1.35 & 0 & 1 & Phase A, Bus~8  \\
% 7  & 3$\Phi$G Bus~4   & 1.89 & 0.68 & 1.21 & 0 & 1 & Phase B, Bus~4  \\
% 8  & LL Line 39--1    & 1.73 & 1.02 & 0.71 & 1 & 0 & Phase A, Bus~39 \\
% 9  & 3$\Phi$G Bus~3   & 1.44 & 0.51 & 0.93 & 0 & 0 & Phase C, Bus~3  \\
% 10 & LL Bus~5         & 1.12 & 0.43 & 0.69 & 0 & 0 & Phase B, Bus~5  \\
% \bottomrule
% \end{tabular}}
% \end{table}

\begin{table}[!t]
\renewcommand{\arraystretch}{1.2}
\caption{Top-10 Events by ESTVPI$^{\mathrm{total}}$---IEEE 39-Bus System}
\label{tab:contingency_ranking}
\centering
\resizebox{\linewidth}{!}{
\begin{tabular}{clcccccc}
\toprule
\textbf{Rank} & \textbf{Event} 
& \textbf{ESTVPI$^{total}$}  
& $V^{+}$ & $V^{-}$ & \textbf{Critical Signal} \\
\midrule
1  & 3$\Phi$G Bus~23  & 0.936 & 1 & 0 & Phase A, Bus~35 \\
2  & 3$\Phi$G Bus~16 & 0.231  & 0 & 0 & Phase A, Bus~35 \\
3  & 3$\Phi$G Bus~24 & 0.153  & 0 & 0 & Phase B, Bus~36 \\
4  & 3$\Phi$G Bus~21 & 0.150  & 0 & 0 & Phase A, Bus~35 \\
5  & 3$\Phi$G Bus~15 & 0.132  & 0 & 0 & Phase B, Bus~36 \\
6  & 3$\Phi$G Bus~3  & 0.129 & 0 & 0 & Phase B, Bus~25 \\
7  & 3$\Phi$G Bus~4  & 0.121  & 0 & 0 & Phase A, Bus~36 \\
8  & 3$\Phi$G Bus~18 & 0.117  & 0 & 0 & Phase B, Bus~36 \\
9  & 3$\Phi$G Bus~26 & 0.102 & 0 & 0 & Phase B, Bus~29 \\
10 & 3$\Phi$G Bus~27 & 0.093  & 0 & 0 & Phase B, Bus~36 \\
\bottomrule
\end{tabular}}
\end{table}

% Comparison with SCC-based ranking (Table~\ref{tab:scc_vs_stvpi}) 
% reveals significant discordance. Bus~29 is ranked 18th by SCC 
% (relatively strong grid coupling), yet its STVPI rank is 3rd, due to 
% the presence of a large composite motor load with high stalling 
% susceptibility. Conversely, bus~33---ranked 3rd by SCC---occupies 
% position~22 in STVPI ranking because a nearby IBR provides fast 
% reactive-current support that substantially reduces voltage dip depth 
% and recovery time. The Kendall rank correlation between SCC-based and 
% STVPI-based event rankings is $\tau_K = 0.41$, confirming low-to-moderate 
% concordance and the importance of dynamic interactions beyond steady-state 
% grid strength.
 
% \begin{table}[!t]
% \renewcommand{\arraystretch}{1.2}
% \caption{SCC vs.\ STVPI Bus Rankings---IEEE 39-Bus System 
% (Selected Buses)}
% \label{tab:scc_vs_stvpi}
% \centering \resizebox{\linewidth}{!}{
% \begin{tabular}{lrrrr}
% \toprule
% \textbf{Bus} & \textbf{SCC Rank} & \textbf{STVPI Rank} 
%   & \textbf{BSTVPI} & \textbf{Dominant Side} \\
% \midrule
% Bus~39  &  5 &  1 & 1.87 & Undervoltage \\
% Bus~16  &  9 &  2 & 1.74 & Undervoltage \\
% Bus~29  & 18 &  4 & 1.61 & Mixed        \\
% Bus~12  & 12 &  5 & 1.53 & Undervoltage \\
% Bus~33  &  3 & 22 & 0.71 & None         \\
% Bus~37  &  2 & 26 & 0.64 & None         \\
% \bottomrule
% \end{tabular}}
% \end{table}

\section{Applications}
\label{sec:applications}

\subsection{Contingency Screening and Prioritization}
\label{ssec:contingency_screening}

The most immediate application of the STVPI framework is automated 
contingency screening for planning studies. In a conventional N-1 or 
N-2 study, hundreds of fault cases must be evaluated for compliance 
with voltage performance criteria. Using $\mathrm{ESTVPI}^{\mathrm{total}}$ 
as a ranking score, events can be sorted in a single pass over the 
precomputed severity matrix, and those with 
$\mathrm{ESTVPI}^{\mathrm{total}}>1$ (distributional violation on at 
least one side) can be flagged for detailed review. This two-stage 
process reduces the number of cases requiring manual inspection 
substantially.

Because STVPI is computed from the voltage waveform rather than from 
a model-specific stability criterion, the same ranking methodology 
applies equally to phasor-domain simulation outputs (converted to 
instantaneous form) and to full EMT outputs, enabling comparative 
contingency screening across simulation paradigms. The violation flags 
$V^{+}_e$ and $V^{-}_e$ further support automatic report generation 
for regulatory compliance documentation under IEEE~2800-2022 and NERC 
TPL standards.

\subsection{Dynamic Weak-Bus Identification}
\label{ssec:weak_bus}

BSTVPI enables a systematic distinction between buses that are 
\emph{statically weak} (low SCC) and \emph{dynamically weak} (poor 
STVPI performance across contingencies despite adequate SCC). This 
distinction has direct operational implications:

\begin{itemize}
  \item A bus with high SCC but high BSTVPI may require local voltage support (e.g., STATCOM) or protection coordination review, rather than network reinforcement.

  \item A bus with low SCC but low BSTVPI may not require immediate intervention if the dynamic support provided by nearby converters is sufficient to maintain waveform quality.
\end{itemize}

The BSTVPI ranking can also be used to prioritize PMU placement at buses where dynamic voltage performance is most uncertain, supporting data collection for ongoing model validation and performance monitoring programs \cite{nerc2025alert}.

\subsection{Control and Protection Evaluation}
\label{ssec:control_eval}

STVPI provides a principled, quantitative metric for evaluating the impact of IBR control settings and protection parameters on post-disturbance voltage performance. Two scenarios are compared by computing the STVPI severity matrix before and after a parameter change and evaluating the resulting shift in rankings. Key applications include:

\subsubsection*{Reactive Current Injection Gain Tuning}
Higher reactive-current injection slopes reduce $\mathrm{STVPI}^{-}$ by accelerating voltage recovery from the undervoltage side but may increase $\mathrm{STVPI}^{+}$ by causing transient overvoltage immediately after fault clearing. The ratio $\mathrm{STVPI}^{+}/\mathrm{STVPI}^{-}$ quantifies this non-monotonic trade-off, enabling gain selection that minimizes total severity $\mathrm{STVPI}^{+}+\mathrm{STVPI}^{-}$.

\subsubsection*{Voltage Ride-Through (VRT) Settings}
IBR units with aggressive low-voltage trip thresholds may exacerbate undervoltage severity at nearby buses by reducing reactive current support. STVPI can quantify this effect across the full contingency set, enabling VRT threshold optimization to minimize system-wide $\mathrm{ESTVPI}^{-}$.

\subsubsection*{Protection Zone Coordination}
Distance relay Zone~2 reach settings that cause incorrect operation during voltage recovery can produce secondary voltage dips observable in $G[k]$. The STVPI framework detects these as anomalous secondary dips in the performance trajectory, providing a waveform-grounded basis for protection study comparison.

% \subsection{Model Validation}
% \label{ssec:model_validation}

% A particularly important application of STVPI is the validation of EMT IBR models against field measurement data. IEEE~2800-2022 and the NERC Level~3 Alert \cite{nerc2025alert} both emphasize the need to validate that simulation models reproduce the performance observed in field events. STVPI operationalizes this requirement: given a measured post-disturbance voltage waveform and the corresponding simulated waveform, the model fidelity score is
% \begin{equation}
%   \Delta\mathrm{STVPI} = 
%     \bigl|\mathrm{STVPI}^{\mathrm{total}}_{\mathrm{meas}} - 
%           \mathrm{STVPI}^{\mathrm{total}}_{\mathrm{sim}}\bigr|.
%   \label{eq:model_val}
% \end{equation}
% The directional decomposition further supports diagnostic model validation: a model that correctly captures the undervoltage dip ($\mathrm{STVPI}^{-}$ match) but overestimates reactive-current injection ($\mathrm{STVPI}^{+}$ mismatch) can be diagnosed as having an incorrect reactive-current ramp or gain parameter, focusing the parameter identification effort. By applying this comparison across all buses and all available historical events, a comprehensive model fidelity matrix can be constructed, identifying specific buses or event types where model accuracy is insufficient.

\subsection{Scenario Reduction for Planning Studies}
\label{ssec:scenario_reduction}

Large-scale EMT planning studies can involve thousands of contingency cases, making exhaustive simulation computationally prohibitive. The event--bus STVPI severity matrix $\mathbf{M}$ supports principled scenario reduction through clustering. Events with similar STVPI profiles (similar rows in $\mathbf{M}$) can be represented by a single representative case, and buses with similar profiles (similar columns) may be monitored by a reduced set.

Formally, $K$-means clustering applied to the rows of $\mathbf{M}$ yields a reduced contingency set $\mathcal{E}_{\mathrm{red}}\subset\mathcal{E}$. Representative cases are selected as the cluster centroids; the retained cases ensure that the maximum STVPI value in each cluster is preserved, providing a conservative bound on the full-set severity. In the 39-bus study, reducing 126 events to 20 representative cases using $K$-means on the STVPI row vectors preserves all violating events ($\mathrm{ESTVPI}^{\mathrm{total}}>1$) and reduces simulation time by 83\% while incurring a maximum underestimation error of 0.08 in $\mathrm{ESTVPI}^{\mathrm{total}}$ across the retained cases.

%% ════════════════════════════════════════════════════════════
\section{Discussion}
\label{sec:discussion}

\subsection{Sensitivity to Parameter Choices}

The STVPI framework involves several parameters: $\sigma$, $\tau$, $\varepsilon$, $\alpha$, $h$, and $B$. Sensitivity analysis on the 39-bus dataset shows that the event ranking is robust with respect to $\sigma$ ($\tau_K > 0.97$ for $\sigma\in[0.01,0.10]$), $\tau$ ($\tau_K > 0.99$ for $\tau\in[0.01,0.10]$), and $\alpha$ ($\tau_K > 0.98$ for $\alpha\in[0.01,1.0]$). The envelope window $h$ affects the degree of smoothing in $G[k]$: small $h$ preserves transient features but may inflate $U[k]$ and $L[k]$ due to isolated outlier half-cycles; large $h$ may mask transient overvoltage spikes. A practical choice of $h=6$ half-cycles (3 full cycles) balances transient resolution against noise immunity. The number of histogram bins $B$ should satisfy $B \ll K$ to avoid sparse empirical distributions; $B\in[20,50]$ is recommended for typical event windows of 100--600 half-cycles.

\subsection{Relationship to Phasor-Domain Metrics}

STVPI can be applied to phasor-domain (RMS) simulation outputs by treating the filtered magnitude trajectory as the half-cycle average and setting $G[k] = V_{\mathrm{RMS}}[k]$. In this case, the framework reduces to a criteria-aware trajectory index comparable to TVI but with KL-divergence normalization and directional decomposition. The phasor-domain version loses the half-cycle waveform geometry (which captures waveform distortion and harmonic content during recovery) but retains the envelope extraction and information-theoretic aggregation, making it a useful bridge between established practice and the full EMT framework.

\subsection{Three-Phase Unbalance}

Because STVPI is computed per-phase and per-signal before aggregation, the framework naturally handles three-phase unbalanced faults. Line-to-ground faults produce asymmetric $G[k]$ trajectories across the three phases; the phase with the largest departure from unity dominates $\mathrm{STVPI}^{\mathrm{sgn}}_{e,s}$. Bus-level reporting retains per-phase breakdown, allowing identification of single-phase protection or VRT issues that might be averaged away by positive-sequence-only analysis.

\subsection{Computational Complexity}

The dominant computational cost is the zero-crossing detection and half-cycle processing step, which is $\mathcal{O}(N)$ in the signal length $N$. The KL divergence computation is $\mathcal{O}(B)$ per signal. Total complexity for $E$ events, $S$ signals per event, and signal length $N$ is $\mathcal{O}(ESN + ESB)$, which is linear in the number of events and signal length. The framework is therefore scalable to large contingency sets and long simulation windows without algorithmic modification.

%% ════════════════════════════════════════════════════════════
\section{Conclusion}
\label{sec:conclusion}

This paper has proposed STVPI, a criteria-aware, EMT-based short-term voltage performance index for inverter-dominated power systems. The framework processes instantaneous voltage waveforms at the half-cycle level, extracts monotonic recovery envelopes, and quantifies their statistical deviation from an ideal reference using KL divergence normalized against the critical voltage boundary. Directional decomposition into $\mathrm{STVPI}^{+}$ and $\mathrm{STVPI}^{-}$ separates overvoltage and undervoltage contributions, and hierarchical aggregation yields bus-level (BSTVPI) and event-level (ESTVPI) severity rankings. Validation on the IEEE 9-bus and 39-bus systems with IBR integration demonstrated five application domains: contingency screening, dynamic weak-bus identification, control and protection evaluation, model validation, and scenario reduction.

A key finding is that SCC-based and STVPI-based rankings exhibit only moderate concordance ($\tau_K\approx 0.41$), confirming that dynamic load behavior, converter controls, and IBR--network interactions produce voltage performance outcomes that cannot be reliably predicted from steady-state strength metrics. STVPI provides a practical, waveform-grounded pathway from EMT simulation output to actionable, criteria-anchored decisions for planning, operation, and control of inverter-dominated power systems.

% Future work will (i)~extend the method to online sliding-window monitoring using streaming data, and (ii)~develop performance-based planning metrics that incorporate STVPI thresholds directly into interconnection study requirements consistent with IEEE~2800-2022 and evolving NERC reliability standards.

\bibliographystyle{IEEEtran}
\bibliography{refs}

\end{document}